\tikzstyle{decision}=[diamond,draw]
\tikzstyle{line}=[draw]
\tikzstyle{elli}=[draw,ellipse]
\tikzstyle{arrow} = [thick]
\newcommand{\mb}{\mbox{ }}
\newcommand{\one}{\mathbf{1}}
\newcommand{\nn}{\nonumber}
\newcommand{\R}{\mathbf{R}}
\newcommand{\ra}{\rightarrow}
\newcommand{\E}{\mathbf{E}}
\newcommand{\F}{\mathcal{F}}
\newcommand{\et}{||\Gamma\bj-\tg\bj||_\infty}
\newenvironment{proof}{{\bf Proof:} }{}
\newtheorem{theorem}{Theorem}
\newtheorem{lemma}[theorem]{Lemma}
\newtheorem{assumption}{Assumption}
\newtheorem{definition}[theorem]{Definition}
\newtheorem{corollary}{Corollary}
\newtheorem{remark}{Remark}
\newcommand{\tj}{\tilde{J}_c}
\newcommand{\hj}{\hat{J}_c}
\newcommand{\bj}{\bar{J}}
\newcommand{\tv}{\tilde{V}}
\newcommand{\hv}{\hat{V}}
\newcommand{\hu}{\hat{u}}
\newcommand{\hr}{\hat{r}_c}
\newcommand{\tr}{\tilde{r}_c}
\newcommand{\tg}{\tilde{\Gamma}}
\title{A Generalized Reduced Linear Program for Markov Decision Processes}
\author{Chandrashekar Lakshminarayanan and Shalabh Bhatnagar,\\ Department Computer Science and Automation,\\ Indian Institute of Science, Bangalore-560012, India.\\\{chandrul,shalabh\}@csa.iisc.ernet.in}
\begin{document}
\maketitle
\begin{abstract}
Markov decision processes (MDPs) with large number of states are of high practical interest. However, conventional algorithms to solve MDP are computationally infeasible in this scenario. Approximate dynamic programming (ADP) methods tackle this issue by computing approximate solutions. A widely applied ADP method is approximate linear program (ALP) which makes use of linear function approximation and offers theoretical performance guarantees. Nevertheless, the ALP is difficult to solve due to the presence of a large number of constraints and in practice, a reduced linear program (RLP) is solved instead. The RLP has a tractable number of constraints sampled from the orginial constraints of the ALP. Though the RLP is known to perform well in experiments the theoretical guarantees are available only for a specific RLP obtained under idealized assumptions.\\
In this paper, we generalize the RLP to define a generalized reduced linear program (GRLP) which has a tractable number of constraints that are obtained as positive linear combinations of the original constraints of the ALP. The main contribution of this paper is the novel theoretical framework developed to obtain error bounds for any given GRLP. Central to our framework are two $\max$-norm contraction operators. Our result solves theoretically justifies linear approximation of constraints. We discuss the implication of our results in the contexts of ADP and reinforcement learning. We also demonstrate via an example in the domain of controlled queues that the experiments conform to the theory.
\end{abstract}
\section{Introduction}\label{intro}
Markov decision processes (MDPs) is an important mathematical framework to study optimal sequential decision making problems that arise in science and engineering. Solving an MDP involves computing the optimal \emph{value-function} ($J^*$), a vector whose dimension is the number of states. MDPs with small number of states can be solved easily by conventional solution methods such as value/ policy iteration or linear programming (LP) \cite{BertB}. Dynamic programming is at the heart of all the conventional solution methods for MDPs.\\
\indent The term \emph{curse-of-dimensionality} (or in short \emph{curse}) denotes the fact that the number of states grows exponentially in the number of state variables. Most practical MDPs suffer from the curse, i.e., have large number of states and the $J^*$ is difficult to compute. A practical way to tackle the curse is to compute an approximate value function $\tilde{J}$ instead of $J^*$. The methods that compute $\tilde{J}$ instead of $J^*$ are known as approximate dynamic programming (ADP) methods whose success depends on the quality of approximation, i.e., on the quantity $||J^*-\tilde{J}||$. Most ADP methods employ linear function approximation (LFA), i.e., let $\tilde{J}=\Phi r^*$, where $\Phi$ is a feature matrix and $r^*$ is a learnt weight vector. Dimensionality reduction is achieved by choosing $\Phi$ to have fewer columns in comparison to the number of states and this makes computing $\tilde{J}$ easier.\\
\indent Approximate linear program (ALP) \cite{ALP,CS,SALP,ALP-Bor,gkp,fs,npalp} employs LFA in the linear programming formulation (\cite{BertB,NDP}) of MDP. The ALP computes an approximate value function and offers sound theoretical guarantees. A serious shortcoming of the ALP is the large number of constraints (of the order of the number of states). A technique studied in literature that tackles the issue of large number of constraints is constraint sampling \cite{CS,CST} wherein one solves a reduced linear program (RLP) with a small number of constraints sampled from the constraints of the ALP. \cite{CS} presents performance guarantees for the RLP when the constraints are sampled with respect to the stationary distribution of the optimal policy. Such an idealized assumption on the availability of the optimal policy (which in turn requires knowledge of $J^*$) is a shortcoming. Nevertheless, the RLP has been shown to perform empirically well (\cite{CS,ALP,SALP}) even when the constraints are not sampled using the stationary distribution of the optimal policy.\\
Motivated by the gap between the limited theoretical guarantees of the RLP and its successful practical efficacy, in this paper we provide a novel theoretical framework to characterize the error due to constraint reduction/approximation. The novelty and salient points of our contributions are listed below:
\begin{enumerate}
\item We define a generalized reduced linear program (GRLP) which has a tractable number of constraints that are obtained as positive linear combinations of the original constraints of the ALP.
\item We develop a novel analytical framework in order to relate $\hat{J}$, the solution to the GRLP, and the optimal value function $J^*$. In particular, we come up with two novel $\max$-norm contraction operators called the least upper bound (LUB) projection operator and the approximate least upper bound projection operator (ALUB).
\item We show that $||J^*-\hat{J}||\leq (c_1+c_2)$, where $c_1>0$, $c_2>0$ are constants. While the term $c_1$ corresponds to the error inherent to the ALP itself, the term $c_2$ constitutes the additional error introduced due to constraint approximation.
\item The results from the GRLP framework solves the problem of theoretically justifying linear approximation of constraints. Unlike the bounds in \cite{CS} that hold only for specific RLP %(whose constraints are sampled according to the stationary distribution of the optimal policy) 
our bounds hold for any GRLP and as a result any RLP.
\item We also discuss qualitatively the relative importance of our results in the context of ADP and their implication in the reinforcement learning setting.
\item We demonstrate via an example in controlled queues that the experiments conform to the theory developed.
\end{enumerate}
The rest of the paper is organized as follows. First, we present the basics of MDPs. We then discuss the ALP technique, the basic error bounds as well as, the issues and proposed solutions in literature, following by which we present the open questions we solve in this paper. We then present the main results of the paper namely the GRLP and its error analysis. We then present a qualitative discussion of our result followed by the numerical example.
\section{Markov Decision Processes (MDPs)}
In this section, we briefly discuss the basics of Markov Decision Processes (MDPs) (the reader is referred to \cite{BertB,Puter} for a detailed treatment).\\
\textbf{The MDP Model:} An MDP is a $4$-tuple $<S,A,P,g>$, where $S$ is the state space, $A$ is the action space, $P$ is the probability transition kernel and $g$ is the reward function. We consider MDPs with large but finite number of states, i.e., $S=\{1,2,\ldots,n\}$ for some large $n$, and the action set is given by $A=\{1,2,\ldots,d\}$. For simplicity, we assume that all actions are feasible in all states. The probability transition kernel $P$ specifies the probability $p_a(s,s')$ of transitioning from state $s$ to state $s'$ under the action $a$. We denote the reward obtained for performing action $a\in A$ in state $s\in S$ by $g_a(s)$.\\
\textbf{Policy:} A policy $\mu$ specifies the action selection mechanism, and is described by the sequence $\mu=\{u_1,u_2,\ldots,u_n,\ldots\}$, where $u_n\colon S \ra A, \mbox{ }\forall n \geq 0$. A stationary deterministic policy (SDP) is one where $u_n\equiv u, \mbox{ }\forall n\geq 0$ for some $u\colon S \ra A$. By abuse of notation we denote the SDP by $u$ itself instead of $\mu$. In the setting that we consider, one can find an SDP that is optimal \cite{BertB,Puter}. In this paper, we restrict our focus to the class $U$ of SDPs.  Under an SDP $u$, the MDP is a Markov chain with probability transition kernel $P_u$.\\
\textbf{Value Function:} Given an SDP $u$, the infinite horizon discounted reward corresponding to state $s$ under $u$ is denoted by $J_u(s)$ and is defined by
\begin{align}
J_u(s)\stackrel{\Delta}{=}\E[\sum_{n=0}^\infty \alpha^n g_{a_n}(s_n)|s_0=s,a_n=u(s_n)\mbox{ }\forall n\geq 0],\nn
\end{align}
where $\alpha \in (0,1)$ is a given discount factor. Here $J_u(s)$ is known as the value of the state $s$ under the SDP $u$, and the vector quantity $J_u\stackrel{\Delta}{=}(J_u(s), \forall s\in S)\in R^n$ is called the value-function corresponding to the SDP $u$.\\
\textbf{The optimal policy} $u^*$ is obtained as $u^*(s)\stackrel{\Delta}{=}\arg\max_{u\in U}J_u(s)$\footnote{Such $u^*$ exists and is well defined in the case of infinite horizon discounted reward MDP, for more details see \cite{Puter}.}.\\
\textbf{The optimal value-function} $J^*$ is the one obtained under the optimal policy, i.e., $J^*=J_{u^*}$.\\
\textbf{The Bellman Equation and Operator:} Given an MDP, our aim is to find the optimal value function $J^*$ and the optimal policy $u^*$. The optimal policy and value function obey the Bellman equation (BE) as under: $\forall s \in S$,
\begin{subequations}\label{bell}
\begin{align}
\label{bellval}J^*(s)&=\max_{ a\in A}\big(g_a(s)+\alpha \sum_{s'}p_a(s,s')J^*(s')\big),\\
\label{bellpol}u^*(s)&=\arg\max_{ a\in A}\big(g_a(s)+\alpha \sum_{s'}p_a(s,s')J^*(s')\big).
\end{align} 
\end{subequations}
Typically $J^*$ is computed first and $u^*$ is obtained by substituting $J^*$ in \eqref{bellpol}.\\
The Bellman operator $T\colon \R^n \ra \R^n$ is defined using the model parameters of the MDP as follows:
\begin{align}
(TJ)(s)=\max_{a \in A}\big(g_a(s)+\alpha \sum_{s'} p_a(s,s')J(s')\big), \mb\text{where}\mb J\in \R^n.\nn
\end{align}
\textbf{Basis Solution Methods:} When the number of states of the MDP is small, $J^*$ and $u^*$ can be computed exactly using conventional methods such as value/policy iteration and linear programming (LP) \cite{BertB}.\\
\textbf{Curse-of-Dimensionality} is a term used to denote the fact that the number of states grows exponentially in the number of state variables. Most MDPs occurring in practice suffer from the curse, i.e., have large number of states and it is difficult to compute $J^*\in \R^n$ exactly in such scenarios.\\
\textbf{Approximate Dynamic Programming} \cite{lspi,lspe,ALP,wang2014approximate}(ADP) methods compute an approximate value function $\tilde{J}$ instead of $J^*$. In order to make the computations easier ADP methods employ function approximation (FA) where in $\tilde{J}$ is chosen from a parameterized family of functions. The problem then boils down to finding the optimal parameter which is usually of lower dimension and is easily computable.\\
\textbf{Linear Function Approximation (LFA)} \cite{ALP,lspe,fourier,krylov,proto} is a widely used FA scheme such that the approximate value function $\tilde{J}=\Phi r^*$, where $\Phi=[\phi_1|\ldots|\phi_k]$ is an $n\times k$ feature matrix and $r^*$ is the parameter to be learnt.
\section{Approximate Linear Programming}
We now present the linear programming formulation of the MDP which forms the basis for ALP. The LP formulation is obtained by unfurling the $\max$ operator in the BE in \eqref{bell} into a set of linear inequalities as follows:
\small
\begin{align}\label{mdplp}
\min_{J\in \R^n} &c^\top J\nn\\
\text{s.t}\mb &J(s)\geq g_a(s)+\alpha\sum_{s'}p_a(s,s')J(s'), \mb\forall s\in S, a \in A,
\end{align}
\normalsize
where $c\in \R^n_+$ is a probability distribution and denotes the relative importance of the various states. One can show that $J^*$ is the solution to \eqref{mdplp} \cite{BertB}. The LP formulation in \eqref{mdplp} can be represented in short\footnote{$J\geq TJ$ is a shorthand for the $nd$ constraints in \eqref{mdplp}. It is also understood that constraints $(i-1)n+1,\ldots,in$ correspond to the $i^{th}$ action.} as,
\begin{align}\label{mdplpshort}
\min_{J\in \R^n} &c^\top J\nn\\
\text{s.t}\mb &J\geq T J.
\end{align}
The approximate linear program (ALP) is obtained by making use of LFA in the LP, i.e., by letting $J=\Phi r$ in \eqref{mdplpshort} and is given as
\begin{align}\label{alp}
\min_{r\in \R^k} &c^\top \Phi r\nn\\
\text{s.t}\mb &\Phi r\geq T \Phi r.
\end{align}
Unless specified otherwise we use $\tr$ to denote the solution to the ALP and $\tj=\Phi \tr$ to denote the corresponding approximate value function.
The following is a preliminary error bound for the ALP from \cite{ALP}:
\begin{theorem}\label{restate}
Let $\mathbf{1}$, i.e., the vector with all-components equal to $1$, be in the span of the columns of $\Phi$ and $c$ be a probability distribution. Then, if $\tilde{J}_c=\Phi \tilde{r}_c$ is an optimal solution to the ALP in \eqref{alp}, then $||J^*-\tj||_{1,c}\leq \frac{2}{1-\alpha}\min_{r}||J^*-\Phi r||_\infty$, where $||x||_{1,c}=\sum_{i=1}^n c(i)|x(i)|$.
\end{theorem}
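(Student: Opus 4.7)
The plan is to use the standard de Farias--Van Roy argument, which rests on three facts about the Bellman operator: $T$ is monotone, it is an $\alpha$-contraction in the max-norm with fixed point $J^*$, and it satisfies the shift identity $T(J+\beta\mathbf{1}) = TJ + \alpha\beta\mathbf{1}$ for any scalar $\beta$. These facts let me translate any near-solution of the Bellman equation into a feasible point of the ALP by shifting it upward by a constant multiple of $\mathbf{1}$, which is legitimate precisely because the hypothesis places $\mathbf{1}$ in the span of the columns of $\Phi$.

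First I would observe that any ALP-feasible $\Phi r$ dominates $J^*$ pointwise: from $\Phi r \ge T\Phi r$ and monotonicity, iteration gives $\Phi r \ge T^m \Phi r$ for all $m$, and letting $m\to\infty$ yields $\Phi r \ge J^*$ since $T^m\Phi r \to J^*$. In particular $\tilde{J}_c \ge J^*$, so
\begin{equation*}
\|J^* - \tilde{J}_c\|_{1,c} \;=\; c^\top \tilde{J}_c - c^\top J^*,
\end{equation*}
which removes the absolute value from the weighted $\ell_1$-norm and lets me exploit the ALP's objective directly.

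Next I would construct an explicit feasible competitor. Let $r^\epsilon$ attain $\epsilon := \min_r\|J^* - \Phi r\|_\infty$, and set $\beta = \tfrac{1+\alpha}{1-\alpha}\epsilon$. Since $\mathbf{1}$ is in the span of $\Phi$, there exists $\bar r$ with $\Phi\bar r = \Phi r^\epsilon + \beta\mathbf{1}$. The feasibility check uses the shift identity and the contraction property:
\begin{equation*}
\Phi\bar r - T\Phi\bar r \;=\; \bigl(\Phi r^\epsilon - T\Phi r^\epsilon\bigr) + (1-\alpha)\beta\mathbf{1} \;\ge\; -(1+\alpha)\epsilon\,\mathbf{1} + (1-\alpha)\beta\mathbf{1} \;=\; 0,
\end{equation*}
where the middle inequality comes from bounding $\Phi r^\epsilon \ge J^* - \epsilon\mathbf{1}$ and $T\Phi r^\epsilon \le J^* + \alpha\epsilon\mathbf{1}$. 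So $\Phi\bar r$ is feasible for the ALP.

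Finally I would combine: by optimality of $\tilde r_c$ in the ALP, $c^\top \tilde{J}_c \le c^\top \Phi\bar r = c^\top \Phi r^\epsilon + \beta$, hence
\begin{equation*}
\|J^* - \tilde{J}_c\|_{1,c} \;\le\; c^\top(\Phi r^\epsilon - J^*) + \beta \;\le\; \epsilon + \tfrac{1+\alpha}{1-\alpha}\epsilon \;=\; \tfrac{2}{1-\alpha}\epsilon,
\end{equation*}
using that $c$ is a probability distribution so $c^\top(\Phi r^\epsilon - J^*) \le \|\Phi r^\epsilon - J^*\|_\infty = \epsilon$. The only real subtlety, and the step I would double-check, is calibrating the shift constant $\beta$: it must be large enough to restore feasibility after the approximation slack $(1+\alpha)\epsilon$, yet small enough that together with the $\epsilon$ coming from the $c$-weighted gap it sums to exactly $\frac{2\epsilon}{1-\alpha}$. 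Everything else is bookkeeping on the three structural properties of $T$.
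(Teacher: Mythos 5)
Your proof is correct and is essentially the standard argument: the paper states Theorem~\ref{restate} without proof, citing \cite{ALP}, and your construction (shifting $\Phi r^\epsilon$ by $\tfrac{(1+\alpha)\epsilon}{1-\alpha}\mathbf{1}$ to restore feasibility, using monotonicity/contraction/shift of $T$, then comparing objectives via $\tilde{J}_c \geq J^*$) is exactly the proof used there. The same shift-by-a-constant device appears verbatim in the paper's own proof of Theorem~\ref{mt2}, so your approach matches the paper's framework rather than deviating from it.
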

For a more detailed treatment of the ALP and sophisticated bounds the reader is referred to \cite{ALP}. Note that the ALP is a linear program in $k$ ($<<n$) variables as opposed to the LP in \eqref{mdplpshort} which has $n$ variables. Nevertheless, the ALP has $nd$ constraints (same as the LP) which is an issue when $n$ is large and calls for constraint approximation/reduction techniques.
\subsection{Related Work}
\textbf{Constraint sampling and The RLP:}
The most important work in the direction of constraint reduction is constraint sampling \cite{CS} wherein a reduced linear program (RLP) is solved instead of the ALP. While the objective of the RLP is same as that of the ALP, the RLP has only $m<<nd$ constraints. These $m$ constraints are \emph{sampled} from the original $nd$ constraints of the ALP according to a special sampling distribution $\psi_{u^*,V}$, where $u^*$ is the optimal policy and $V$ is a Lyapunov function (see \cite{CS} for a detailed presentation). If $\tilde{r}$ and $\tilde{r}_{RLP}$ are the solutions to the ALP and the RLP respectively form \cite{CS} we know that $||J^*-\Phi\tilde{r}_{RLP}||_{1,c}\leq ||J^*-\Phi\tilde{r}||_{1,c}+\epsilon ||J^*||_{1,c}$. A major gap in the theoretical analysis is that the error bounds are known for only a specific RLP formulated using idealized assumptions, i.e., knowledge of $u^*$.\\
\textbf{Other works:} Most works in literature make use of the underlying structure of the problem to cleverly reduce the number of constraints of the ALP. A good example is \cite{gkp}, wherein the structure in factored linear functions is exploited. The use of basis function also helps constraint reduction in \cite{Mor-Kum}. In \cite{ALP-Bor} the constraints are approximated indirectly by approximating the square of the Lagrange multipliers. \cite{petrik} reduces the transitional error ignoring the representational and sampling errors. Empirical successes include repeated application of constraint sampling to solve Tetris \cite{CST}. \\
\textbf{Open Questions:} The fact that RLP works well empirically goads us to build a more elaborate theory for constraint reduction. In particular, one would like to answer the following questions related to constraint reduction in ALP that have so far remained open. \\
$\bullet$ As a natural generalization of the RLP, what happens if we define a generalized reduced linear program (GRLP) whose constraints are positive linear combinations of the original constraints of the ALP?\\
$\bullet$ Unlike \cite{CS} which provides error bounds for a specific RLP formulated using an idealized sampling distribution is it possible to provide error bounds for any GRLP (and as result any RLP)?
In this paper, we address both of the questions above.
\section{Generalized Reduced Linear Program}
We define the generalized reduced linear program (GRLP) as below:
\begin{align}\label{grlp}
\min_{r\in \chi} &c^\top \Phi r,\nn\\
\text{s.t}\mb & W^\top \Phi r\geq W^\top T \Phi r, 
\end{align}
where $W \in \R_+^{nd\times m}$ is an $nd\times m$ matrix with all positive entries and $\chi \subset \R^k$ is any bounded set such that $\hj \in \chi$. Thus the $i^{th}$ ($1\leq i\leq m$) constraint of the GRLP is a positive linear combination of the original constraints of the ALP, see Assumption~\ref{superassump}. Constraint reduction is achieved by choosing $m<<nd$. Unless specified otherwise we use $\hr$ to denote the solution to the GRLP in \eqref{grlp} and $\hj=\Phi \hr$ to denote the corresponding approximate value function.
We assume the following throughout the rest of the paper:
\begin{assumption}\label{superassump}
$W \in \R_+^{nd\times m}$ is a full rank $nd\times m$ matrix with all non-negative entries. The first column of the feature matrix $\Phi$ (i.e.,$\phi_1$) is $\one$\footnote{$\one$ is a vector with all components equal to $1$. This definition is used throughout the paper.}  $\in \R^n$ and that $c=(c(i),i=1,\ldots,n)\in \R^n$ is a probability distribution, i.e., $c(i)\geq 0$ and $\sum_{i=1}^n c(i)=1$. It is straightforward to see that a RLP is trivially a GRLP.
\end{assumption}
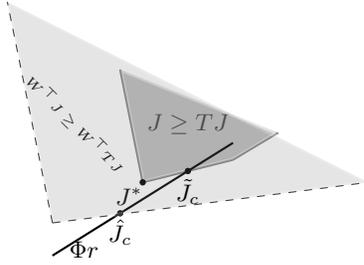
\begin{figure}[h!]
\centering
%\resizebox{columnwidth}{2cm}{
\begin{tikzpicture}[domain=-10:7.7,scale=0.6,font=\small,axis/.style={very thick, ->, >=stealth'}]
%\draw[line,thick,->] (-1,-0.625)--(4,-0.625);
%\draw[line,thick,->] (0,-1)--(0,4);
%\draw[line,thick,-](-0.2,-0.6)--(1,3);
\draw[line,thick,-](0.5,3.5)--(1,1);
\draw[line,thick,-](1,1)--(3,1.5);
\draw[line,thick,-](3,1.5)--(4,2.1);
\node[](one) at (2,2.3){\text{$J\geq TJ$}};
\node[rotate=-45](seven) at (-0.5,2.3){\text{\tiny $W^\top J\geq W^\top TJ$}};
\node[](two) at (-0.3,-0.5){\text{$\Phi r$}};
\node[](three) at (0.7,0.7){\text{$J^*$}};
%\draw[line,thick,-](0,0)--(4,2.5);
 \draw [ultra thick, draw=white, fill=gray, opacity=0.5]
       (0.5,3.5)--(1,1)--(3,1.5)--(4,2.1) -- cycle;
\draw[line,thick,-](-1,-0.625)--(3,1.8750);
 \fill (1,1)  circle[radius=2pt];
 \fill (2,1.25)  circle[radius=2pt];
 \fill (0.5,0.3125)  circle[radius=2pt];
\draw[line,dashed,-](-1,0.1)--(6,1);
\draw[line,dashed,-](-1,0.1)--(-2,5);
 \draw [ultra thick, draw=white, fill=gray, opacity=0.2]
       (-1,0.1)--(6,1)--(-2,5) -- cycle;
\node[] (four) at (2,0.8){\text{$\tilde{J}_c$}};
\node[] (six) at (0.5,-0.1){\text{$\hat{J}_c$}};
\end{tikzpicture}
%}
\caption{The outer lightly shaded region corresponds to GRLP constraints and the inner dark shaded region corresponds to the original constraints. The main contribution of the paper is to provide a bound for $||J^*-\hat{J}_c||$.}
\label{cartoon}
\end{figure}
As a result of constraint reduction the feasible region of the GRLP is a superset of the feasible region of the ALP (see Figure~\ref{cartoon}). In order to bound $||J^*-\tj||$, \cite{ALP} makes use of the property that $\Phi \tr \geq T\Phi \tr$. However in the case of the GRLP this property does not hold anymore and hence it is a challenge to bound the error $||J^*-\hj||$. We tackle this challenge by introducing two novel $\max$-norm contraction operators called the least upper bound projection (LUBP) and approximate least upper bound projection operators (ALUBP) denoted by $\Gamma$ and $\tg$ respectively. We first present some definitions before the main result and a sketch of its proof.
The least upper bound (LUB) projection operator $\Gamma \colon \R^n \ra\R^n$ is defined as below:
\begin{definition}\label{lubpop}
Given $J\in \R^n$, its least upper bound projection is denoted by $\Gamma J$ and is defined as 
\begin{align}\label{gamdef}
(\Gamma J)(i)\stackrel{\Delta}{=}\underset{j=1,\ldots,k}{\min} (\Phi r_{e_j})(i), \mb \forall i=1,\ldots,n,
\end{align}
where $V(i)$ denotes the $i^{th}$ component of the vector $V\in \R^n$. Also in \eqref{gamdef}, $e_j$ is the vector with $1$ in the $j^{th}$ place and zeros elsewhere, and $r_{e_j}$ is the solution to the linear program in \eqref{lubplp} for $c=e_j$.
\begin{align}\label{lubplp}
r_c\stackrel{\Delta}{=}\min_{r\in \chi} &c^\top \Phi r,\nn\\
\text{s.t}\mb &\Phi r\geq  TJ.
\end{align}
\end{definition}
%\vspace{-10pt}
\begin{remark}
\mb\\
%\vspace{-10pt}
\begin{enumerate}
%\item The definition of LUB operator $\Gamma \colon \R^n \ra \R^n$ involves solving $n$ associated linear programs, i.e., one needs to compute $r_{e_i}, \forall i=1,\ldots,n$ .
\item Observe that $\Gamma J\geq TJ$ (follows from the fact that if $a\geq c$ and $b\geq c$ then $\min(a,b)\geq c$, where $a, b, c \in \R$).
\item Given $\Phi$ and $J\in \R^n$, define $\F\stackrel{\Delta}{=}\{\Phi r|\Phi r\geq TJ\}$. Thus $\F$ is the set of all vectors in the span of $\Phi$ that upper bound $TJ$. By fixing $c$ in the linear program in \eqref{lubplp} we select a unique vector $\Phi r_c \in \F$. The LUB projection operator $\Gamma$ picks $n$ vectors $\Phi r_{e_i},i=1,\ldots,n$ from the set $\F$ and $\Gamma J$ is obtained by computing their component-wise minimum.
\item Even though $\Gamma J$ does not belong to the span of $\Phi$, $\Gamma J$ in some sense collates the various best upper bounds that can be obtained via the linear program in \eqref{lubplp}.
\item The LUB operator $\Gamma$ in \eqref{gamdef} bears close similarity to the ALP in \eqref{alp}.
\end{enumerate}
\end{remark}
We define an approximate least upper bound (ALUB) projection operator which has a structure similar to the GRLP and is an approximation to the LUB operator.
\begin{definition}\label{alubpop}
Given $J\in \R^n$, its approximate least upper bound (ALUB) projection is denoted by $\tg J$ and is defined as 
\begin{align}\label{tgamdef}
(\tg J)(i)\stackrel{\Delta}{=}\underset{j=1,\ldots,k}{\min} (\Phi r_{e_j})(i), \mb \forall i=1,\ldots,n,
\end{align}
where $r_{e_j}$ is the solution to the linear program in \eqref{alubplp} for $c=e_j$, and $e_j$ is same as in Definition~\ref{lubpop}.
\begin{align}\label{alubplp}
r_c\stackrel{\Delta}{=}\min_{r\in \chi} &c^\top \Phi r,\nn\\
\text{s.t}\mb &W^\top \Phi r\geq W^\top TJ, W \in \R^{nd\times m}_+ .
\end{align}
\end{definition}
\begin{definition}\label{bestproj}
The LUB projection of $J^*$ is denoted by $\bj=\Gamma J^*$, and let $r^*\stackrel{\Delta}{=}\underset{r\in \R^k}{\arg\min}||J^*-\Phi r^*||$.
\end{definition}
\subsection{Main Result}
\begin{theorem}\label{mr1}
\begin{align}\label{finalbnd}
||J^*-\hj||_{1,c}\leq\frac{6 ||J^*-\Phi r^*||_\infty+2||\Gamma\bj-\tg\bj||_\infty}{1-\alpha}.
\end{align}
\end{theorem}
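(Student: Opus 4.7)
My plan is to reduce the $\|\cdot\|_{1,c}$-bound to a $\|\cdot\|_\infty$-bound (since $c$ is a probability distribution one has $\|\cdot\|_{1,c}\le\|\cdot\|_\infty$), and then to telescope through the intermediate objects $\bj$, $\Gamma\bj$, and $\tg\bj$ introduced in Definitions~\ref{lubpop}--\ref{bestproj}:
\[
\|J^*-\hj\|_\infty \le \|J^*-\bj\|_\infty + \|\bj-\Gamma\bj\|_\infty + \|\Gamma\bj-\tg\bj\|_\infty + \|\tg\bj-\hj\|_\infty.
\]
The whole point of introducing the LUB and ALUB operators is that three of these four summands submit to clean contraction-style estimates, while the last summand isolates precisely the gap introduced by the GRLP's $r$-dependent relaxation of the constraints.

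For the first summand, Assumption~\ref{superassump} that $\one\in\mathrm{span}(\Phi)$ is pivotal: with $\epsilon_1:=\|J^*-\Phi r^*\|_\infty$, the shifted vector $\Phi r^*+\epsilon_1\one$ dominates $J^*=TJ^*$ and lies in $\mathrm{span}(\Phi)$, so it is feasible for the LUB LP~\eqref{lubplp} at $J=J^*$; evaluating at coordinate $i$ yields $(\Phi r_{e_i})(i)\le J^*(i)+2\epsilon_1$, which combined with the pointwise lower bound $\bj\ge J^*$ from Remark~1 gives $\|J^*-\bj\|_\infty\le 2\epsilon_1$. The same shift-by-$\one$ trick establishes that $\Gamma$ is an $\alpha$-max-norm contraction (perturbing the right-hand side $TJ$ shifts the LP optima by at most $\alpha\|J_1-J_2\|_\infty$ since $\mathbf{1}$ is affordable in $\mathrm{span}(\Phi)$), so the second summand $\|\Gamma J^*-\Gamma\bj\|_\infty\le 2\alpha\epsilon_1$; the identical argument gives the contraction property of $\tg$. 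The third summand is exactly the $\epsilon_2:=\|\Gamma\bj-\tg\bj\|_\infty$ appearing in the statement.

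The main obstacle is the fourth summand $\|\tg\bj-\hj\|_\infty$: $\tg\bj$ arises from LPs whose right-hand side is frozen at $W^\top T\bj$, whereas the GRLP has the $r$-dependent right-hand side $W^\top T\Phi r$, so there is no inclusion of feasible sets to exploit directly. My approach is to route through the Banach fixed point $J^{\tg}$ of the ALUB operator: the standard contraction estimate gives $\|\tg\bj-J^{\tg}\|_\infty\le\frac{\alpha}{1-\alpha}\|\bj-\tg\bj\|_\infty$, and $\|\bj-\tg\bj\|_\infty\le 2\alpha\epsilon_1+\epsilon_2$ falls out of the previous summands. What remains is to bound $\|\hj-J^{\tg}\|_\infty$, and this is where the real work lies: one must exploit that both $\hj$ and $J^{\tg}$ minimize positive-linear functionals over the $W^\top$-relaxed constraint region in the span of $\Phi$, and that $c$ can be expressed as a convex combination of the unit-cost vectors $e_i$ used to define $\tg$, so that the $c$-versus-$e_i$ mismatch contributes at most $O(\epsilon_1)$.

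Assembling the four contributions $2\epsilon_1+2\alpha\epsilon_1+\epsilon_2+(\text{Step 4})$ and clearing a common denominator of $1-\alpha$ should yield exactly $\frac{6\epsilon_1+2\epsilon_2}{1-\alpha}$; the bookkeeping is mechanical once Step~4 is in place. Step~4 is clearly the novel ingredient, because it is the only step where the loss between the GRLP's self-referential constraint and the ALUB's frozen-$J$ constraint must be quantified without recourse to a monotone inclusion of feasible sets.
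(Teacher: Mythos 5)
Your first three summands and the contraction machinery match the paper's development (Lemma~\ref{bestbnd}, Theorems~\ref{gmaxcontra} and~\ref{tgmaxcontra}, and the telescoping in Theorem~\ref{mt1}), but your Step~4 --- bounding $||\tg\bj-\hj||_\infty$, equivalently $||\hj-\hv||_\infty$ after routing through the fixed point $\hv$ of $\tg$ --- is exactly the missing ingredient, and as you have set it up it cannot be completed. The GRLP only controls its solution through the objective $c^\top \Phi r$: at states where $c$ places little or no mass, $\hj$ can be far from $\hv$ (and from $J^*$) without affecting optimality, so no bound on $||\hj-\hv||_\infty$ in terms of $||J^*-\Phi r^*||_\infty$ and $||\Gamma\bj-\tg\bj||_\infty$ is available. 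This is precisely why the theorem is stated in the $||\cdot||_{1,c}$-norm rather than the sup-norm, and why your intended reduction ``$||\cdot||_{1,c}\le||\cdot||_\infty$ applied to $||J^*-\hj||$'' is the wrong first move; the sup-norm reduction is legitimate only for the piece $||J^*-\hv||$, which involves analytical constructs rather than the LP solution. Your remark that $c$ is a convex combination of the $e_i$'s so the mismatch ``contributes at most $O(\epsilon_1)$'' is not an argument: the $e_i$-programs defining $\tg$ have the frozen right-hand side $W^\top T\bj$, while the GRLP has the $r$-dependent right-hand side $W^\top T\Phi r$, and averaging the cost vectors does not relate their optimizers coordinatewise.

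The paper closes this step differently and you would need its two ingredients. First, Lemma~\ref{relation2} gives the pointwise inequality $\hj\geq\hv$, which (Lemma~\ref{srw}) converts the GRLP into the variational form: $\hr$ minimizes $||\Phi r-\hv||_{1,c}$ over the GRLP-feasible set, since there $||\Phi r-\hv||_{1,c}=c^\top\Phi r-c^\top\hv$. Second, one exhibits an explicit comparison point: with $\gamma=||J^*-\Phi r^*||_\infty$, the vector $\Phi r'=\Phi r^*+\frac{(1+\alpha)\gamma}{1-\alpha}\one$ is feasible for the ALP and hence (Lemma~\ref{subset}) for the GRLP, and satisfies $||\Phi r'-J^*||_\infty\leq\frac{2\gamma}{1-\alpha}$. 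Optimality of $\hr$ in the variational form then gives $||\hj-\hv||_{1,c}\leq||\Phi r'-\hv||_{1,c}\leq||\Phi r'-J^*||_\infty+||J^*-\hv||_\infty\leq\frac{4\gamma+\beta}{1-\alpha}$ with $\beta=||\Gamma\bj-\tg\bj||_\infty$ (Theorem~\ref{mt2}), and combining with $||J^*-\hv||_\infty\leq\frac{2\gamma+\beta}{1-\alpha}$ (Theorem~\ref{cmt1}) yields $\frac{6\gamma+2\beta}{1-\alpha}$. Note also that your final bookkeeping is not ``mechanical'': your first three summands are not divided by $1-\alpha$, so the constants only come out right once the fixed-point and variational arguments above are actually in place.
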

\begin{proof}
Here we provide a sketch of the proof. Figure~\ref{schematic} gives an idea of the steps that lead to the result. First, one shows that the operators $\Gamma$ and $\tg$ have the $\max$-norm contraction property with factor $\alpha$. As a result, operators $\Gamma$ and $\tg$ have fixed points $\tv\in \R^n$ and $\hv \in \R^n$ respectively. This leads to the inequalities $\tj\geq \tv \geq J^*$ and $\hj \geq \hv$ (see Figure~\ref{schematic}), followed by which one can bound the term $||J^*-\hv||_{\infty}$ and then go on to show that any solution $\tr$ to the GRLP is also a solution to the program in \eqref{grlpeqprog}.
\begin{align}\label{grlpeqprog}
\min_{r\in \chi} &||\Phi r-\hv||_{1,c}\nn\\
\text{s.t}\mb & W^\top \Phi r\geq W^\top T \Phi r.
\end{align}
One then obtains the bound $||J^*-\hj||_{1,c}$ as in \eqref{finalbnd} using the fact that $||J^*-\bj||_{\infty}\leq 2||J^*-\Phi r^*||_{\infty}$ where $r^*$ is as in Definition~\ref{bestproj}.
\end{proof}\\
\FloatBarrier
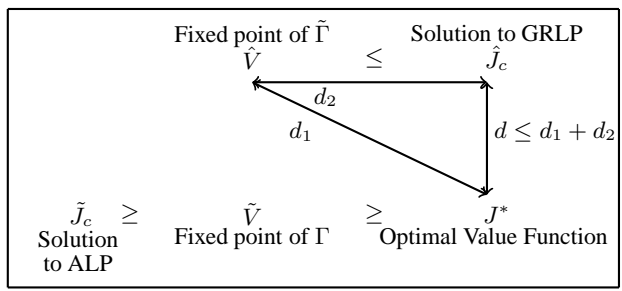
\begin{figure}[h!]
\centering
%\resizebox{columnwidth}{}{
\begin{tikzpicture}[domain=0:7.7,scale=0.65,font=\small,axis/.style={very thick, ->, >=stealth'}]
\draw [line,thick,-] (0,-1)--(0,4.7);
\draw [line,thick,-] (0,4.7)--(12.5,4.7);
\draw [line,thick,-] (12.5,4.7)--(12.5,-1);
\draw [line,thick,-] (0,-1)--(12.5,-1);
\node[](one) at (1.5,0.5) {$\tj$};
\node[](four) at (2.5,0.5) {$\geq$};
\node[](two) at (5,0.5) {$\tv$};
\node[](three) at (10,0.5) {$J^*$};
\node[](five) at (7.5,0.5) {$\geq$};
\node[](six) at (10,3.7) {$\hj$};
\node[](seven) at (5,3.7) {$\hv$};
\node[](eight) at (7.5,3.7) {$\leq$};
\node[](nine) at(1.5,0){\text{Solution }};
\node[](twenty) at(1.5,-0.5	){\text{to ALP }};
\node[](ten) at(5,0){\text{Fixed point of }$\Gamma$};
\node[](eleven) at(10,0){\text{Optimal Value Function }};
\node[](twelve)at (10,4.2){\text{Solution to GRLP}};
\node[](thirteen)at (5,4.2){\text{Fixed point of }$\tg$};
\draw [line,thick,<->] (5,3.2)--(9.8,0.9);
\draw [line,thick,<->] (5,3.2)--(9.8,3.2);
\draw [line,thick,<->] (9.8,3.2)--(9.8,0.9);
\node[](fourteen)at (6,2.2){$d_1$};
\node[](fifteen)at (6.5,2.9){$d_2$};
\node[](sixteen)at (11.2,2.2){$d\leq d_1+d_2$};
\end{tikzpicture}
%}
\caption{A schematic of the error analysis. \quad\quad\quad\quad\quad Here  $d=||J^*-\hj||_{1,c}.$}
\label{schematic}
\end{figure}
\vspace{-10pt}
It is important to note that computing $\Gamma/\tg$ involves solving $n$ linear programs which is easy when $n$ is small, however, the same becomes difficult and impractical when $n$ is large. Nevertheless, we hasten to point out that these quantities are only analytical constructs that lead us to the error bounds, and need not be calculated in practice for systems with large $n$.
\subsection{Result Discussion}
We now make various important qualitative observations about the result in Theorem~\ref{mr1}.\\
\textbf{Error Terms:}
The error term is split into two factors, the first of which is related to the best possible projection while the second factor is related to constraint approximation. The second factor $||\Gamma \bj-\tg\bj||_\infty$ is completely defined in terms of $\Phi$, $W$ and $T$, and does not require knowledge of stationary distribution of the optimal policy. It makes intuitive sense since given that $\Phi$ approximates $J^*$, it is enough for $W$ to depend on $\Phi$ and $T$ without any additional requirements. Unlike the result in \cite{CS} which holds only for a specific RLP formulated under ideal assumptions, our bounds hold for any GRLP and as a result for any given RLP. Another interesting feature of our result is that it holds with probability $1$. Also by making use of appropriate Lyapunov functions as in \cite{ALP}, the error bound in \eqref{finalbnd} can also be stated using a weighted $L_\infty$-norm, thereby indicating the relative importance of states.\\
\textbf{Additional insights on constraint sampling:}
It is easy to notice from Definitions~\ref{lubpop}, ~\ref{alubpop} and \ref{bestproj} that for any given state $s\in S$, $\Gamma \bj(s)\geq J^*(s)$, and that $\Gamma\bj(s)\geq\tg \bj(s)$. If the state $s$ is selected in the RLP, then it is also true that $\Gamma \bj(s)\geq \tg \bj(s)\geq J^*(s)$. Thus the additional error $|\Gamma \bj(s) -\tg \bj(s)|$ due to constraint sampling is less than the original projection error $|\Gamma \bj(s)-J^*(s)|$ due to function approximation. This means that the RLP is expected to perform well whenever \emph{important} states are retained after constraint sampling. Thus the sampling distribution need not be the stationary distribution of the optimal policy as long as it samples the important states, an observation that might theoretically explain the empirical successes of the RLP \cite{ALP,CST,SALP}.\\
\textbf{Relation to other ADP methods:}
\FloatBarrier
\begin{table}[H]
\resizebox{\columnwidth}{!}{
\begin{tabular}{|c|c|c|}\hline
ADP Method	&Empirical	&Theoretical\\\hline
Projected Bellman &\ding{51}    &\ding{53}-Policy Chattering\\
Equation	&\cite{lstd,lspe,lspi}	&\cite{BertB}\\\hline  
ALP		&\ding{53}-Large Constraints &\ding{51}-\cite{ALP}\\\hline		
RLP		&\ding{51} -\cite{CST,ALP,SALP} &\ding{53}-  Only under\\
&&ideal assumptions\\\hline
\end{tabular}
}
\end{table}
A host of the ADP methods such as \cite{lspi,lspe,lstd,Tsit} are based on solving the projected Bellman equation (PBE). The PBE based methods have been empirically successful and also have theoretical guarantees for the approximate value function. However, a significant shortcoming is that they suffer from issue of \emph{policy-chattering} (see section~$6.4.3$ of \cite{BertB}), i.e., the sequence of policies might oscillate within a set of bad policies. A salient feature of the ALP based methods is that they find only one approximate value function $\tj$ and one sub-optimal policy derived as a greedy policy with respect to $\tj$. As a result there is no such issue of policy-chattering for the ALP based methods. By providing the error bounds for the GRLP, our paper provides the much required theoretical support for the RLP. Our GRLP framework closes the long-standing gap in the literature of providing a theoretical framework to bound the error due to constraint reduction in ALP based schemes.\\
\textbf{GRLP is linear function approximation of the constraints:}
In order to appreciate this fact consider the Lagrangian of the ALP and GRLP in \eqref{lag} and \eqref{lag2} respectively, i.e., 
\begin{align}\label{lag}
\tilde{L}(r,\lambda)=c^\top \Phi r+\lambda^\top (T\Phi r-\Phi r), \\ \label{lag2}\hat{L}(r,q)=c^\top \Phi r+q^\top W^\top (T\Phi r-\Phi r).
\end{align}
The insight that the GRLP is linear function approximation of constraints (i.e., the Lagrangian multipliers) can be obtained by noting that $ Wq\approx \lambda$ in \eqref{lag2}. Note that while the ALP employs LFA in its objective, the GRLP employs linear approximation both in the objective as well as the constraints. This has significance in the context of the reinforcement learning setting \cite{Sutton} wherein the model information is available in the form of noisy sample trajectories. RL algorithms make use of stochastic approximation (SA) \cite{SA} and build on ADP methods to come up with incremental update schemes to learn from noisy samples presented to them. An SA scheme to solve the GRLP in RL setting can be derived in a manner similar to \cite{ALP-Bor}.
\section{Application to Controlled Queues}\label{exmp}
We take up an example in the domain of controlled queues to show that experiments confirm with the theory developed. More specifically, we look at the error bounds for different constraints reduction schemes to demonstrate the fact that whenever value of $\et$ is less the GRLP solution is closer the optimal value function.\\
The queuing system consists of $n=10^4$ states and $d=4$ actions. We chose $n=10^4$ because it was possible to solve both the GRLP and the exact LP (albeit with significant effort) so as to enumerate the approximation errors. We hasten to mention that while we could run the GRLP for queuing systems with $n>10^4$ without much computational overhead, solving the exact LP was not possible for $n>10^4$ as a result of which the approximation error could not be computed.\\
\textbf{Queuing Model:}
The queuing model used here is similar to the one in Section~$5.2$ of \cite{ALP}. We consider a single queue with arrivals and departures. The state of the system is the queue length with the state space given by $S=\{0,\ldots,n-1\}$, where $n-1$ is the buffer size of the queue. The action set $A=\{1,\ldots,d\}$ is related to the service rates. We let $s_t$ denote the state at time $t$. The state at time $t+1$ when action $a_t \in A $ is chosen is given by $s_{t+1}= s_{t}+1$ with probability $p$, $s_{t+1}= s_{t}-1$ with probability $q(a_t)$ and $s_{t+1}= s_t$, with probability $(1-p-q(a_t)$. For states $s_t=0$ and $s_t=n-1$, the system dynamics is given by 	$s_{t+1}= s_{t}+1$ with probability $p$ when $s_t=0$ and $s_{t+1}=s_t-1$ with probability $q(a_t)$ when $s_t=n-1$.
The service rates satisfy $0<q(1)\leq \ldots\leq q(d)<1$ with $q(d)>p$ so as to ensure `stabilizability' of the queue. The reward associated with the action $a \in A$ in state $s\in S$ is given by $g_a(s)=-(s+60q(a)^3)$.\\
\textbf{Choice of $\Phi:$} We make use of polynomial features in $\Phi$ (i.e., $1,s,\ldots,s^{k-1}$) since they are known to work well for this domain \cite{ALP}. This takes care of the term $||J^*-\Phi r^*||_\infty$ in \eqref{finalbnd}. \\
\textbf{Selection of $W$:} For our experiments, we choose two contenders for the $W$-matrix:\\
{$\mathbf{(i)}$} $W_c$- matrix that corresponds to sampling according to $c$. This is justified by the insights obtained from the error term $\et$ and the idea of selecting the important states.\\
{$\mathbf{(ii)}$} $W_a$ state-aggregation matrix, a heuristic derived by interpreting $W$ to be the feature matrix that approximates the Lagrange multipliers as $\lambda\approx Wq$, where $\lambda \in \R^{nd}, r\in \R^m$. One can show \cite{dolgov} that the optimal Lagrange multipliers are the discounted number of visits to the ``state-action pairs'' under the optimal policy $u^*$, i.e., 
\begin{align}
\lambda^*(s,u^*(s))&=\big(c^\top(I-\alpha P_{u^*})^{-1}\big)(s)\nn\\
				&= \big(c^\top(I+\alpha P_{u^*}+\alpha^2 P_{u^*}^2+\ldots)\big)(s).\nn\\
			\lambda^*(s,u^*(s))&=0, \forall a \neq u^*(s).\nn
\end{align}
where $P_{u^*}$ is the probability transition matrix with respect to the optimal policy. Even though we might not have the optimal policy in practice $u^*$, the fact that $\lambda^*$ is a linear combination of $\{P_{u^*},P^2_{u^*},\ldots\}$ hints at the kind of features that might be useful for the $W$ matrix. Our choice of $W_a$ matrix to correspond to aggregation of near by states is motivated by the observation that $P^n$ captures $n^{th}$ hop connectivity/neighborhood information.
The aggregation matrix $W_a$ is defined as below: $\forall i=1,\ldots,m$,
\begin{align}\label{wdes}
W_a(i,j)&=1, \mb\forall j\mb\text{s.t}\mb j=(i-1)\times\frac{n}{m}+k+(l-1)\times n, \nn\\&\mb\quad\quad k=1,\ldots,\frac{n}{m}, l=1,\ldots,d,\nn\\
&=0,\mb\text{otherwise}.
\end{align}
In order to provide a contrast between good and bad choices of $W$ matrices we also make use of two more matrices, an ideal matrix $W_i$ generated by sampling according to the stationary distribution of the optimal policy as in \cite{CS} and $W_c$ generated by sampling using $c$ and  $W_r$ a random matrix in $\R^{nd\times m}_+$. For the sake of comparison we compute $||\Gamma\bj-\tg\bj||_\infty$ for the different $W$ matrices.
 Though computing $||\Gamma\bj-\tg\bj||_\infty$ might be hard in the case of large $n$, since $||\Gamma\bj-\tg\bj||_\infty$ is completely dependent on the structure of $\Phi$, $T$ and $W$ we can compute it for small $n$ instead and use it as a surrogate. Accordingly, we first chose a smaller system $Q_S$ with $n=10$, $d=2$, $k=2$, $m=5$, $q(1)=0.2$, $q(2)=0.4$, $p=0.2$ and $\alpha=0.98$. In the case of $Q_S$, $W_a$ (\eqref{wdes} with $m=5$) turns out to be a $20 \times 5$ matrix where the $i^{th}$ constraint of the GRLP is the average of all constraints corresponding to states $(2i -1)$ and $2i$ (there are four constraints corresponding to these two states). The various error terms are listed in Table~\ref{errterms} and plots are shown in Figure~\ref{q1}. It is clear from Table~\ref{errterms} that $W_a$, $W_i$ and $W_c$ have much better $\et$ than randomly generated positive matrices. Since each constraint is a hyperplane, taking linear combinations of non-adjacent hyperplanes might drastically affect the final solution. This could be a reason why $W_r$ (random matrix) performs badly in comparison with other $W$ matrices.
\FloatBarrier
\begin{table}[H]
\centering
%\resizebox{\columnwidth}{!}{
\begin{tabular}{|c|c|c|c|c|}\hline
Error Term&	$W_i$&	$W_c$& $W_a$& $W_r$ \\\hline
$\et$ & $39$	&$84$& $54.15$& $251.83$ \\\hline
\end{tabular}
%}
\caption{Shows various error terms for $Q_S$.}
\label{errterms}
\end{table}
Having validated the choices of $W$s on $Q_S$ we then consider a moderately larger queuing system (denoted by) $Q_L$ with $n=10^4$ and $d=4$ with $q(1)=0.2$, $q(2)=0.4$, $q(3)=0.6$, $q(4)=0.8$, $p=0.4$ and $\alpha=0.98$. In the case of $Q_L$ we chose $k=4$ (i.e., we used $1, s,s^2$ and $s^3$ as basis vectors) and we chose $W_a$ \eqref{wdes}, $W_c$, $W_i$ and $W_r$ with $m=50$. We set $c(s)=(1-\zeta) \zeta^s, \mb\forall s=1,\ldots,9999$, with $\zeta=0.9$ and $\zeta=0.999$ respectively. The results in Table~\ref{pref} show that performance exhibited by $W_a$ and $W_c$ are better by several orders of magnitude over `random' in the case of the large system $Q_L$ and is closer to the ideal sampler $W_i$. Also note that a better performance of $W_a$ and $W_c$ in the larger system $Q_L$ tallies with a lower value of $\et$ in the smaller system $Q_S$.
\FloatBarrier
\begin{table}[H]
\resizebox{\columnwidth}{!}{
\begin{tabular}{|c|c|c|c|c|}\hline
Error Terms&	$W_i$&	$W_c$& $W_a$& $W_r$ \\\hline
$||J^*-\hj||_{1,c}$ for $\zeta=0.9$& $32$&	$32$& $220$& $5.04\times 10^4$ \\\hline
$||J^*-\hj||_{1,c}$ for $\zeta=0.999$& $110$&	$180.5608$& $82$& $1.25\times 10^7$ \\\hline
\end{tabular}
}
\caption{Shows performance metrics for $Q_L$.}% Second row corresponds to $W$ as in \eqref{wdes} and the third row shows quantities averaged over $10$ random positive $4000\times 50$ matrices.}
\label{pref}
\end{table}
\section{Conclusion}
Solving MDPs with large number of states is of practical interest. However, when the number of states is large, it is difficult to calculate the exact value function. ALP is a widely studied ADP scheme that computes an approximate value function and offers theoretical guarantees. Nevertheless, the ALP is difficult to solve due to its large number of constraints and in practice a reduced linear program (RLP) is solved. Though RLP has been shown to perform well empirically, theoretical guarantees are available only for a specific RLP formulated under idealized assumptions. This paper provided a more elaborate treatment of constraint reduction/approximation. Specifically, we generalized the RLP to formulate a generalized reduced linear program (GRLP) and provided error bounds. Our results solved a major open problem of analytically justifying linear function approximation of the constraints. We discussed the implications of our results in the contexts of ADP and reinforcement learning. We demonstrated the fact that experiments conform to the theory developed in this paper via an example in the domain of controlled queues. Future directions include providing more sophisticated error bounds based on Lyapunov functions, a two-time scale actor-critic scheme to solve the GRLP, and basis function adaptation schemes to tune the $W$ matrix.
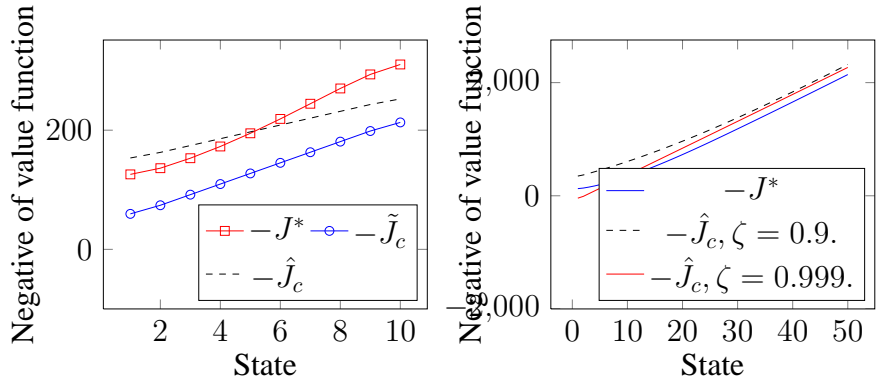
\begin{figure}[H]
\resizebox{\columnwidth}{!}{
\begin{tabular}{cc}
\begin{tikzpicture}
\begin{axis}[legend pos=south east,legend columns=2, scale=0.75,ymin=-100,xlabel=\Large State,ylabel=\Large Negative of value function]
\addplot[mark=square,red]  plot file {V};
\addplot[mark=o,blue]  plot file {V_alp};
\addplot[dashed,black]  plot file {V_rlp};
\Large
\addlegendentry{$-J^*$}
\addlegendentry{$-\tj$}
\addlegendentry{$-\hj$}
\end{axis}
\end{tikzpicture}
\begin{tikzpicture}
\begin{axis}[legend pos=south east,legend columns=1, scale=0.75,ymin=-2000,xlabel=\Large State,ylabel=\Large{N}egative of value function]
\addplot[mark=.,black,blue]  plot file {Vl};
\addplot[mark=.,dashed,black]  plot file {Vl_rlp};
\addplot[mark=.,red]  plot file {Vl999_rlp};
\Large
\addlegendentry{$-J^*$}
\addlegendentry{$-\hj, \zeta=0.9.$}
\addlegendentry{$-\hj, \zeta=0.999.$}
\end{axis}
\end{tikzpicture}
\end{tabular}
}
\caption{Plot corresponding to $Q_S$ on the left and $Q_L$ on the right. The GRLP here used $W_a$ in \eqref{wdes} with $m=5$ for $Q_S$ and $m=50$ for $Q_L$.}
\label{q1}
\end{figure}
\bibliography{ref.bib}
\bibliographystyle{plain}
\newpage
\onecolumn
\begin{appendix}
\section{Proofs}
We present the proofs for the Lemmas and Theorems stated in the main body of the paper. As and when required we also state and prove other intermediate lemmas. For the sake of clarity we restate Assumption~\ref{superassump} as Assumption~\ref{wassump}-~\ref{probdist} below:\\
\begin{assumption}\label{wassump}
$W \in \R^{nd\times m}_+$ is a full rank $nd\times m$  matrix (where $m<<nd$) with all non-negative entries, and $\Phi$ is an $n\times k$ feature matrix (where $k<<n$).
\end{assumption}
\begin{assumption}\label{one}
The first column of the feature matrix $\Phi$ (i.e.,$\phi_1$) is $\one \in \R^n$. In other words, the constant function is part of the basis.
\end{assumption}
\begin{assumption}\label{probdist}
$c=(c(i),i=1,\ldots,n)\in \R^n$ is a probability distribution, i.e., $c(i)\geq 0$ and $\sum_{i=1}^n c(i)=1$.
\end{assumption}
We now state without proof the following properties of $T$:
\begin{lemma}\label{monotone}
\textbf{Monotonicity:} Let $J_1,J_2\in \R^n$ be such that $J_1\geq J_2$, then $TJ_1\geq TJ_2$.
\end{lemma}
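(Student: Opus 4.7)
The plan is to unfold the definition of the Bellman operator from the body of the paper, namely $(TJ)(s) = \max_{a\in A}\bigl(g_a(s) + \alpha \sum_{s'} p_a(s,s') J(s')\bigr)$, and then argue monotonicity one state at a time. First, I would fix an arbitrary $s \in S$ and an arbitrary action $a \in A$, and observe that $a \mapsto g_a(s) + \alpha \sum_{s'} p_a(s,s') J(s')$ depends on $J$ only through a non-negative linear combination of its entries, since $p_a(s,s') \geq 0$ for all $s'$ and the discount factor satisfies $\alpha \in (0,1)$.

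From the hypothesis $J_1 \geq J_2$ (componentwise), multiplying by the non-negative weights $\alpha p_a(s,s')$ and summing over $s'$ preserves the inequality, yielding $\alpha \sum_{s'} p_a(s,s') J_1(s') \geq \alpha \sum_{s'} p_a(s,s') J_2(s')$. Adding the common reward term $g_a(s)$ to both sides gives
\[
g_a(s) + \alpha \sum_{s'} p_a(s,s') J_1(s') \;\geq\; g_a(s) + \alpha \sum_{s'} p_a(s,s') J_2(s'),
\]
which holds for every action $a \in A$.

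The key step is then the monotonicity of the pointwise maximum over the finite action set: if $h_1(a) \geq h_2(a)$ for all $a \in A$, then $\max_{a} h_1(a) \geq \max_{a} h_2(a)$. Applying this with $h_i(a) := g_a(s) + \alpha \sum_{s'} p_a(s,s') J_i(s')$ gives $(TJ_1)(s) \geq (TJ_2)(s)$, and since $s$ was arbitrary we conclude $TJ_1 \geq TJ_2$ componentwise.

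There is no real obstacle here: the argument is a direct consequence of (i) non-negativity of the transition probabilities, (ii) linearity of the inner expectation in $J$, and (iii) monotonicity of finite maxima. The entire proof is a few lines and requires nothing beyond the definition of $T$ already given in the paper; no properties of $\alpha$ beyond $\alpha \geq 0$ are even used, and the same argument would apply verbatim to any policy-evaluation operator $T_u J = g_u + \alpha P_u J$.
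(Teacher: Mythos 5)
Your argument is correct and complete: fixing a state, using non-negativity of the transition probabilities and $\alpha$ to preserve the componentwise inequality under the linear part, and then invoking monotonicity of the finite maximum is exactly the standard proof. The paper itself states Lemma~\ref{monotone} without proof, citing \cite{BertB}, and your few-line argument is precisely the textbook derivation being referenced, so there is nothing to add or correct.
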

\begin{lemma}\label{shift}
\textbf{Shifting:} For any $J \in \R^n$ and $\one \in R^n$ be a vector with all components $1$\footnote{This definition of $\one$ is the same throughout the paper.} and $k \in \R$ be a constant, then $T(J+k\one)=TJ+\alpha k \one$.
\end{lemma}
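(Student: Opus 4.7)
The plan is to prove the shifting identity pointwise, for each state $s \in S$, by unfolding the definition of the Bellman operator given earlier in the paper as $(TJ)(s) = \max_{a \in A}\big(g_a(s) + \alpha \sum_{s'} p_a(s,s') J(s')\big)$. I would first substitute the vector $J + k\one$ for $J$ in this definition, so that the inner sum becomes $\sum_{s'} p_a(s,s')(J(s') + k)$, which splits by linearity into $\sum_{s'} p_a(s,s') J(s') + k \sum_{s'} p_a(s,s')$.

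Next I would invoke the fact that $P$ is a probability transition kernel, so that $\sum_{s'} p_a(s,s') = 1$ for every action $a \in A$ and every state $s \in S$. This reduces the extra term to exactly $k$, multiplied by the discount factor $\alpha$ coming from the original definition. At that point the expression inside the $\max$ reads $g_a(s) + \alpha \sum_{s'} p_a(s,s') J(s') + \alpha k$, where the additive constant $\alpha k$ does not depend on $a$ and can therefore be pulled outside the maximization.

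Pulling $\alpha k$ out of the max leaves precisely $(TJ)(s) + \alpha k$, which is the $s$-th component of $TJ + \alpha k \one$. Since this identity holds for every $s$, the vector equation $T(J + k \one) = TJ + \alpha k \one$ follows. I expect no real obstacle here; the only substantive ingredient is the rowstochasticity of $P$, and the rest is linearity and the fact that a constant independent of the maximization variable can be factored outside the $\max$. The proof should be short, essentially a few lines of display-math calculation.
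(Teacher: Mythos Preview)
Your proof is correct and is exactly the standard argument for this shifting property. Note, however, that the paper does not actually prove this lemma: it states Lemmas~\ref{monotone}, \ref{shift}, and \ref{contra} without proof, remarking that they are standard in the MDP literature and referring the reader to \cite{BertB}. So there is nothing to compare against beyond observing that your pointwise computation---using row-stochasticity of $P$ to reduce $\alpha k\sum_{s'}p_a(s,s')$ to $\alpha k$ and then pulling the action-independent constant out of the $\max$---is precisely the textbook derivation the paper is implicitly invoking.
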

\begin{lemma}\label{contra}
\textbf{Contraction:} For any $J_1,J_2\in \R^n$, $||TJ_1-TJ_2||_\infty\leq\alpha ||J_1-J_2||_\infty$.
\end{lemma}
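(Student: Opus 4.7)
The plan is to prove the contraction property componentwise, using the standard fact that the difference of two maxima is bounded by the maximum of the differences, together with the observation that for each action $a$ and state $s$, $p_a(s,\cdot)$ is a probability distribution over $S$.

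First I would fix an arbitrary state $s \in S$ and expand both $(TJ_1)(s)$ and $(TJ_2)(s)$ using the definition of $T$. Writing $f_a(s,J) \stackrel{\Delta}{=} g_a(s) + \alpha \sum_{s'} p_a(s,s') J(s')$, I have $(TJ_i)(s) = \max_{a \in A} f_a(s, J_i)$. The elementary inequality
\[
\max_{a} f_a(s, J_1) - \max_{a} f_a(s, J_2) \leq \max_{a}\bigl(f_a(s, J_1) - f_a(s, J_2)\bigr)
\]
then reduces the analysis to bounding $f_a(s, J_1) - f_a(s, J_2) = \alpha \sum_{s'} p_a(s,s')\bigl(J_1(s') - J_2(s')\bigr)$ uniformly in $a$.

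Next I would bound this difference by pulling $|J_1(s') - J_2(s')|$ out by its max-norm: since $p_a(s,\cdot)$ is a probability distribution, $\sum_{s'} p_a(s,s') \bigl(J_1(s') - J_2(s')\bigr) \leq \sum_{s'} p_a(s,s') \|J_1 - J_2\|_\infty = \|J_1 - J_2\|_\infty$. This yields $(TJ_1)(s) - (TJ_2)(s) \leq \alpha \|J_1 - J_2\|_\infty$. Swapping the roles of $J_1$ and $J_2$ gives the reverse inequality, so $|(TJ_1)(s) - (TJ_2)(s)| \leq \alpha \|J_1 - J_2\|_\infty$. Taking the maximum over $s \in S$ delivers the claim.

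There is essentially no hard step here; the only subtlety to be careful about is the direction of the $\max - \max$ inequality, which must be applied to each side separately (once with $J_1$ on top and once with $J_2$ on top) to obtain the absolute value bound. The nonnegativity of $\alpha$ and of the transition probabilities ensures the monotone pull-through of the max-norm, and finiteness of $A$ guarantees the maxima are attained, so the chain of inequalities goes through cleanly.
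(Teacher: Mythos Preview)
Your argument is correct and is the standard proof of the $\max$-norm contraction of the Bellman operator. The paper itself does not supply a proof for this lemma; it simply states the result as standard in the MDP literature and refers the reader to \cite{BertB}, so your write-up in fact provides more detail than the paper does.
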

Lemmas~\ref{monotone},~\ref{shift} and ~\ref{contra} are standard in MDP literature and can be found in \cite{BertB}.  

\begin{lemma}\label{cseq}
The RLP in $(2)$ of \cite{CS} obtained via sampling the constraints is a special case of GRLP.
\end{lemma}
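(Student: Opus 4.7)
The plan is to establish this by an explicit construction: given any RLP obtained by sampling $m$ constraints from the $nd$ original ALP constraints, I will exhibit a particular choice of the matrix $W \in \R_+^{nd \times m}$ for which the GRLP in \eqref{grlp} collapses exactly to that RLP. The key observation is that the ALP constraint system $\Phi r \geq T\Phi r$ is indexed by state-action pairs $(s,a)$, i.e., by rows $1, \ldots, nd$, and sampling simply retains $m$ of these rows. Multiplication on the left by $W^\top$ with a $0/1$ selector matrix realizes exactly this row-selection operation.

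Concretely, suppose the RLP from \cite{CS} retains the constraints indexed by $j_1, \ldots, j_m \in \{1, \ldots, nd\}$ (with possible repetitions if sampling is with replacement; distinctness is not essential). Define $W$ by setting its $i$-th column to be the standard basis vector $e_{j_i} \in \R^{nd}$, i.e., $W(\ell,i) = 1$ if $\ell = j_i$ and $0$ otherwise. Then I would verify two things in sequence. First, $W$ satisfies Assumption~\ref{wassump}: its entries lie in $\{0,1\} \subset \R_+$, and discarding duplicate columns (if any) yields a matrix whose columns are distinct standard basis vectors, hence linearly independent, so $W$ has full column rank. Second, for any $r \in \chi$, the $i$-th row of the GRLP constraint $W^\top \Phi r \geq W^\top T\Phi r$ reads $(\Phi r)_{j_i} \geq (T\Phi r)_{j_i}$, which is precisely the $j_i$-th constraint of the ALP and hence the $i$-th sampled constraint of the RLP. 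Since the GRLP and the RLP also share the same objective $c^\top \Phi r$ and the same domain $\chi$ (which can be taken large enough to contain both solutions), the two programs coincide.

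There is essentially no analytical obstacle here; the lemma is a structural observation rather than a quantitative estimate. The only mild subtlety worth flagging is the full-rank requirement in Assumption~\ref{wassump}: if constraint sampling with replacement produces duplicate indices, I would either (i) note that duplicate columns contribute redundant constraints and can be removed without changing the feasible region, or (ii) perturb by adding a negligible weight on a distinct index to guarantee full rank while leaving the RLP unchanged. Either way, the reduction goes through, and the lemma follows immediately from the construction.
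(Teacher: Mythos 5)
Your construction is exactly the paper's: a $0/1$ selector matrix $W$ whose columns are standard basis vectors picking out the sampled constraints, so that $W^\top \Phi r \geq W^\top T\Phi r$ reproduces the sampled rows verbatim. You simply spell out the full-rank and duplicate-column details that the paper leaves implicit, so the proposal is correct and follows essentially the same route.
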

\begin{proof}
Let the constraints of the ALP be numbered from $1\to nd$ and $q_1,\ldots,q_m$ denote the $m$ sampled constraints. The GRLP with $W$ defined as 
\begin{align}
W(i,j)&=1, \mb\text{if}\mb q_i=j\nn\\
&=0, \mb\text{otherwise}
\end{align}
is the RLP with the corresponding sampled constraints.
\end{proof}
\begin{lemma}\label{subset}
Let $r_f\in \R^k$ be any feasible solution to the ALP in \eqref{alp}, then it is also feasible for the GRLP in \eqref{grlp}.
\end{lemma}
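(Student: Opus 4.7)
The plan is to observe that this lemma is essentially a direct consequence of the non-negativity of $W$, combined with the fact that multiplying a componentwise non-negative inequality by a matrix with non-negative entries preserves the inequality. There is no real obstacle here; the proof is essentially one line.

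Concretely, I would proceed as follows. Take any $r_f \in \R^k$ that is feasible for the ALP in \eqref{alp}. By definition of ALP feasibility, $\Phi r_f \geq T\Phi r_f$, so the vector $v := \Phi r_f - T\Phi r_f \in \R^{nd}$ satisfies $v \geq 0$ componentwise. By Assumption~\ref{wassump}, every entry of $W$ (and hence of $W^\top$) is non-negative. Therefore each entry of $W^\top v$ is a non-negative linear combination of non-negative scalars, and so $W^\top v \geq 0$. Rearranging gives $W^\top \Phi r_f \geq W^\top T\Phi r_f$, which is precisely the constraint set of the GRLP in \eqref{grlp}.

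To wrap up, I would note the (trivial) caveat concerning the box constraint $r \in \chi$: since $\chi$ is only required to be a bounded set containing $\hj$, the statement is to be read as feasibility with respect to the inequality constraints of the GRLP, or equivalently one enlarges $\chi$ to also contain $r_f$ (the inequality argument above is entirely independent of $\chi$). The only substantive ingredient is the non-negativity of $W$, which is precisely why the GRLP constraints are required to be \emph{positive} linear combinations of the ALP constraints: any sign change in $W$ would flip some of the inherited inequalities and destroy the superset property of the feasible region illustrated in Figure~\ref{cartoon}.
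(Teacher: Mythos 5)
Your proof is correct and follows essentially the same route as the paper's: feasibility for the GRLP is inherited because each of its constraints is a non-negative linear combination of the ALP constraints, which is exactly the non-negativity of $W$ invoked in the paper. Your additional remark about the set $\chi$ is a reasonable clarification but does not change the argument.
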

\begin{proof}
Follows from the fact that $W$ has all positive entries and that each constraint of the GRLP is a positive linear combination of original constraints in the ALP.
\end{proof}
\begin{lemma}\label{bestbnd}
Let $r^*\in \R^k$ be defined as $r^*\stackrel{\Delta}{=}\arg\min_{r\in R^k}||J^*-\Phi r||_\infty$, then 
\begin{align}
||J^*-\bj||_\infty\leq 2||J^*-\Phi r^*||_\infty.
\end{align}
\end{lemma}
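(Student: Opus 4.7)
The plan is to establish the bound by exhibiting a specific feasible point for the linear programs defining the LUB projection and exploiting Assumption~\ref{one} (that $\one$ is in the span of $\Phi$). Let $\epsilon \stackrel{\Delta}{=} ||J^* - \Phi r^*||_\infty$, so that $\Phi r^* \geq J^* - \epsilon \one$ componentwise. Since $J^* = T J^*$, the LP defining $r_{e_j}$ has the constraint $\Phi r \geq T J^* = J^*$, so to bound $\bar{J}$ from above it suffices to find a feasible $r$ whose $j$th coordinate of $\Phi r$ is not much larger than $J^*(j)$.

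First I would construct the witness. Using Assumption~\ref{one}, there exists a coordinate (say the first) of $r \in \R^k$ such that $\Phi e_1 = \one$. Set $\tilde{r} \stackrel{\Delta}{=} r^* + \epsilon e_1$, so that $\Phi \tilde{r} = \Phi r^* + \epsilon \one \geq J^*$, hence $\tilde{r}$ satisfies the constraint $\Phi r \geq T J^*$ of the LP in \eqref{lubplp}. (I will assume $\chi$ is chosen large enough to contain $\tilde{r}$; this is consistent with the paper's stipulation that $\chi$ is bounded and contains the minimizers of interest.)

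Next I would translate this feasibility into a pointwise upper bound on $\bar{J}$. For each index $j$, the vector $r_{e_j}$ is the LP-optimizer of $e_j^\top \Phi r = (\Phi r)(j)$ over the feasible set, so $(\Phi r_{e_j})(j) \leq (\Phi \tilde{r})(j) = (\Phi r^*)(j) + \epsilon \leq J^*(j) + 2\epsilon$, where the last inequality uses $(\Phi r^*)(j) \leq J^*(j) + \epsilon$. Now by the definition of $\Gamma$, $\bar{J}(j) = (\Gamma J^*)(j) = \min_i (\Phi r_{e_i})(j) \leq (\Phi r_{e_j})(j) \leq J^*(j) + 2\epsilon$.

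Finally, since $\Gamma J^* \geq T J^* = J^*$ (noted in the Remark following Definition~\ref{lubpop}), we have $\bar{J}(j) - J^*(j) \in [0, 2\epsilon]$ for every $j$, giving $||J^* - \bar{J}||_\infty \leq 2\epsilon = 2 ||J^* - \Phi r^*||_\infty$. The only subtle point is the feasibility step (ensuring $\tilde{r}$ lies in $\chi$); beyond that, the argument is a short LP-comparison leveraging the constant-function assumption on $\Phi$.
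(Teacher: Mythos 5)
Your proposal is correct and follows essentially the same route as the paper's (very terse) proof: use Assumption~\ref{one} to shift $\Phi r^*$ by $\epsilon\one$ so that $\Phi r^*+\epsilon\one\geq TJ^*=J^*$ is feasible for each LP in \eqref{lubplp}, compare optima coordinatewise to get $\bar{J}\leq J^*+2\epsilon\one$, and combine with $\Gamma J^*\geq TJ^*=J^*$. The only point you flag that the paper silently ignores is the requirement that the shifted vector lie in $\chi$, which is a fair (and minor) caveat about how $\chi$ must be chosen.
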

\begin{proof}
The result follows from the definition of $\Gamma$ in \eqref{gamdef}, Assumption~\ref{one} and the fact that $\Phi r^*+||J^*-\Phi r^*||_\infty \one\geq TJ^*$.
\end{proof}
\begin{lemma}\label{gmonotone}
For $J_1, J_2\in \R^n$ such that $J_1\geq J_2$, we have $\Gamma J_1\geq \Gamma J_2$.
\end{lemma}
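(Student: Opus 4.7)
The plan is to reduce the statement to a monotonicity property of parametric linear programs, using the Bellman operator's monotonicity (Lemma~\ref{monotone}) as the only MDP-specific ingredient.

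First, I would rewrite $\Gamma J$ in a cleaner coordinate-wise form, namely
\begin{align*}
(\Gamma J)(i) \;=\; \min\{(\Phi r)(i)\,:\, r\in\chi,\ \Phi r \geq TJ\},\qquad i=1,\ldots,n.
\end{align*}
To see why the definition in \eqref{gamdef} collapses to this, fix $i$. By construction $r_{e_i}$ solves the program \eqref{lubplp} with objective $e_i^\top\Phi r=(\Phi r)(i)$, so $(\Phi r_{e_i})(i)$ is exactly the right-hand side above. For any other index $j$, the vector $\Phi r_{e_j}$ is feasible for \eqref{lubplp} (it satisfies $\Phi r\geq TJ$), so $(\Phi r_{e_j})(i)\geq (\Phi r_{e_i})(i)$. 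Taking the minimum over $j$ therefore returns $(\Phi r_{e_i})(i)$, which is the claimed characterization.

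Next, I would apply Lemma~\ref{monotone} to pass from the hypothesis $J_1\geq J_2$ to $TJ_1\geq TJ_2$. Consequently,
\begin{align*}
\{r\in\chi:\Phi r\geq TJ_1\}\ \subseteq\ \{r\in\chi:\Phi r\geq TJ_2\},
\end{align*}
since any $r$ feasible for the tighter right-hand side is also feasible for the looser one. Minimizing the same linear functional $(\Phi r)(i)$ over a smaller set can only increase its value, so
\begin{align*}
(\Gamma J_1)(i) \;=\; \min_{\Phi r\geq TJ_1}(\Phi r)(i) \;\geq\; \min_{\Phi r\geq TJ_2}(\Phi r)(i) \;=\; (\Gamma J_2)(i).
\end{align*}
Since this holds for every coordinate $i$, we conclude $\Gamma J_1\geq \Gamma J_2$, which is the lemma.

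The main obstacle is really a bookkeeping one: the definition of $\Gamma$ is written as an outer $\min$ over $n$ different LP solutions, each optimized against a \emph{different} coordinate objective, so naive coordinate-wise monotonicity does not apply directly. The clean characterization in the first step is what dissolves this difficulty, after which the Bellman-monotonicity of $T$ plus the trivial observation that tightening the constraints of an LP weakly increases its optimum finishes the argument.
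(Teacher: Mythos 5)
Your proof is correct and follows essentially the same route as the paper's: fix a coordinate $i$, use monotonicity of $T$ to get $TJ_1\geq TJ_2$ and hence nesting of the LP feasible sets, and invoke the identity $(\Gamma J)(i)=(\Phi r_{e_i})(i)$. The only difference is that you explicitly justify this identity (via feasibility of each $\Phi r_{e_j}$ for the $e_i$-objective program), whereas the paper simply asserts it.
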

\begin{proof}
Choose any $i\in \{1,\ldots,n\}$ and let $r^1_{e_i}$ and $r^2_{e_i}$ be the unique solutions to the linear program in \eqref{lubplp} for $c=e_i$ with $J=J_1$ and $J=J_2$ respectively. Since $J_1\geq J_2$, we have $TJ_1\geq TJ_2$ and $e_i^\top \Phi r^1_{e_i} \geq e_i^\top \Phi r^2_{e_i}$, i.e., $(\Phi r^1_{e_i})(i)\geq (\Phi r^2_{e_i})(i)$. The proof follows from the fact that $(\Gamma J)(i)=(\Phi r_{e_i})(i),\mb\forall J\in \R^n$, and our choice of $i$ was arbitrary.
\end{proof}\\\vspace{10pt}
\begin{lemma}\label{gshift}
Let $J_1\in \R^n$ and $k\in \R$ be a constant. If $J_2=J_1+k\one$, then $\Gamma J_2=\Gamma J_1+\alpha k\one$.
\end{lemma}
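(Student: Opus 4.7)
The plan is to exploit a direct bijection between feasible points of the two linear programs defining $\Gamma J_1$ and $\Gamma J_2$, using the fact that $\one$ lies in the span of $\Phi$ (Assumption~\ref{one}), so constant shifts can be absorbed into the coefficient of $\phi_1 = \one$.

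First, I would invoke Lemma~\ref{shift} to note that $TJ_2 = TJ_1 + \alpha k \one$. Fix an index $i \in \{1,\ldots,n\}$ and write down the two LPs whose $i$-th coordinates produce $(\Gamma J_1)(i)$ and $(\Gamma J_2)(i)$ respectively, namely $\min_{r\in\chi} e_i^\top \Phi r$ subject to $\Phi r \geq TJ_\ell$ for $\ell = 1,2$. The next step is to introduce the change of variable $r \mapsto r' = r - \alpha k e_1$, where $e_1 \in \R^k$ is the first standard basis vector (so $\Phi e_1 = \phi_1 = \one$). The constraint transforms as $\Phi r \geq TJ_1 + \alpha k \one \iff \Phi r' \geq TJ_1$, giving a one-to-one correspondence between feasible sets of the two LPs (modulo the mild technicality that $\chi$ is assumed large enough to contain both sides of the shift, which I would either absorb into Assumption~\ref{superassump} or note explicitly).

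For the objective, I would compute $e_i^\top \Phi r = e_i^\top \Phi r' + \alpha k \, e_i^\top \phi_1 = e_i^\top \Phi r' + \alpha k$, since $\phi_1 = \one$. Thus the $J_2$-LP and the $J_1$-LP differ only by the additive constant $\alpha k$ in their objectives and a translation in the feasible region; their optimal $r$-values are related by $r^2_{e_i} = r^1_{e_i} + \alpha k e_1$, whence $(\Phi r^2_{e_i})(i) = (\Phi r^1_{e_i})(i) + \alpha k$. Taking this identity at the $i$-th coordinate and recalling the definition $(\Gamma J)(i) = (\Phi r_{e_i})(i)$, I would conclude $(\Gamma J_2)(i) = (\Gamma J_1)(i) + \alpha k$. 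Since $i$ was arbitrary, $\Gamma J_2 = \Gamma J_1 + \alpha k \one$.

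The only mild obstacle I anticipate is the bookkeeping with the bounded set $\chi$: one must ensure the shifted minimizer $r^1_{e_i} + \alpha k e_1$ still lies in $\chi$ so that the change of variables is actually a bijection on feasible sets. This is a standard caveat already implicit in Assumption~\ref{superassump} (where $\chi$ is only required to contain $\hat{J}_c$), and can be handled by assuming $\chi$ is chosen large enough to contain all relevant shifts — the rest of the argument is purely algebraic and uses nothing beyond $\phi_1 = \one$ and Lemma~\ref{shift}.
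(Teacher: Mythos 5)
Your proof is correct and follows the same underlying idea as the paper's: absorb the shift $\alpha k\one$ into the coefficient of $\phi_1=\one$ via $r\mapsto r+\alpha k e_1$, using Lemma~\ref{shift} to get $TJ_2=TJ_1+\alpha k\one$. The execution differs slightly and, if anything, is cleaner. The paper fixes $i$, claims $r^1_{e_i}+\alpha k e_1$ is feasible for the $J_2$-LP, and then argues by contradiction that $r^2_{e_i}=r^1_{e_i}+\alpha k e_1$, explicitly invoking \emph{uniqueness} of the LP minimizers (an assumption it never justifies --- linear programs can have multiple optima). Your change-of-variables argument compares the two LPs as a whole: the map $r\mapsto r-\alpha k e_1$ is a bijection between feasible sets and the objectives differ by the constant $\alpha k$, so the \emph{optimal values} satisfy $(\Gamma J_2)(i)=(\Gamma J_1)(i)+\alpha k$ regardless of whether the minimizers are unique. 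Since $\Gamma J$ is defined through these optimal values, that is all you need, so your version quietly removes the paper's reliance on uniqueness. You also correctly flag the one caveat both arguments share: the LPs in \eqref{lubplp} constrain $r\in\chi$ with $\chi$ bounded, so the shifted point $r^1_{e_i}+\alpha k e_1$ (and its reverse) must still lie in $\chi$ for the correspondence to be a genuine bijection; the paper's proof silently assumes this, and your explicit note that $\chi$ must be taken large enough (or the result read for $\chi=\R^k$) is the right way to handle it.
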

\begin{proof}
Choose any $i\in \{1,\ldots,n\}$, let $r^1_{e_i}$ and $r^2_{e_i}$ be the unique solutions to linear program in \eqref{lubplp} for $c=e_i$ with $J=J_1$ and $J=J_2$ respectively. By Assumption~\ref{one} and Lemma~\ref{shift}, we know that $r^1_{e_i}+\alpha k e_1$ is feasible for the $i^{th}$ linear program associated with $\Gamma J_2$ and we claim that $r^2_{e_i}=r^1_{e_i}+\alpha ke_1$. On the contrary, if $r^2_{e_i}\ne r^1_{e_i}+\alpha ke_1$, then $(\Phi r^2_{e_i})(i)< (\Phi r^1_{e_i}+\alpha k e_1)(i)$ (since the solution to the linear program in \eqref{lubplp} is unique) and since $r^2_{e_i}-\alpha k e_1$ is feasible for the $i^{th}$ linear program associated with $\Gamma J_1$ we will have $(\Phi r^2_{e_i}-k\alpha e_1)(i)<(\Phi r^1_{e_i})(i)$. Thus we have arrived at a contradiction because we assumed that $r^1_{e_i}$ is the unique solution for the $i^{th}$ linear program associated with $\Gamma J_1$. So 
\begin{align}\label{equality}
r^2_{e_i}=r^1_{e_i}+\alpha k e_1, \mb\forall i \in \{1,\ldots,n\},\mb \text{since $i$ was arbitrary}.
\end{align}
From \eqref{equality} and Assumption~\ref{one} it follows that $\Gamma J_2=\Gamma J_1+\alpha k \one$.\\\vspace{10pt}
\end{proof}
\begin{theorem}\label{gmaxcontra}
The operator $\Gamma  \colon \R^n\ra \R^n$ obeys the $\max$-norm contraction property with factor $\alpha$.
\end{theorem}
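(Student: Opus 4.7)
The plan is to combine the two lemmas just established, namely monotonicity (Lemma \ref{gmonotone}) and the shifting property (Lemma \ref{gshift}), via the standard sandwiching argument that is used to derive max-norm contraction for the Bellman operator itself.

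First, I would set $k \stackrel{\Delta}{=} \|J_1 - J_2\|_\infty$ so that, componentwise,
\begin{equation*}
J_2 - k\one \;\leq\; J_1 \;\leq\; J_2 + k\one.
\end{equation*}
Applying the monotonicity of $\Gamma$ (Lemma \ref{gmonotone}) to both inequalities yields
\begin{equation*}
\Gamma(J_2 - k\one) \;\leq\; \Gamma J_1 \;\leq\; \Gamma(J_2 + k\one).
\end{equation*}
Next, invoking the shifting property (Lemma \ref{gshift}), which uses Assumption \ref{one} that $\one$ lies in the span of $\Phi$, I get $\Gamma(J_2 \pm k\one) = \Gamma J_2 \pm \alpha k \one$. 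Substituting,
\begin{equation*}
\Gamma J_2 - \alpha k \one \;\leq\; \Gamma J_1 \;\leq\; \Gamma J_2 + \alpha k \one,
\end{equation*}
which is precisely $\|\Gamma J_1 - \Gamma J_2\|_\infty \leq \alpha k = \alpha \|J_1 - J_2\|_\infty$, as desired.

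There is no real obstacle in this step itself; essentially all of the heavy lifting has been done in Lemmas \ref{gmonotone} and \ref{gshift}. The conceptually nontrivial input is the shifting lemma, whose proof crucially uses (i) Assumption \ref{one}, so that $\one = \Phi e_1$ can be absorbed into a feasible perturbation of the LUB linear programs, and (ii) the shift property $T(J + k\one) = TJ + \alpha k\one$ of the Bellman operator, which provides the factor $\alpha$. Once those ingredients are in place, the contraction proof reduces to a one-line sandwich. A standard corollary, which one would record immediately afterwards, is that $\Gamma$ has a unique fixed point $\tv \in \R^n$ by the Banach fixed-point theorem on $(\R^n, \|\cdot\|_\infty)$; an entirely analogous argument will then hand us the same contraction property for $\tg$ and its fixed point $\hv$, which are the objects needed in the proof of Theorem \ref{mr1}.
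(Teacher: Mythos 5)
Your proof is correct and is essentially identical to the paper's own argument: the paper also sets $\epsilon=\|J_1-J_2\|_\infty$, sandwiches $J_1$ between $J_2\pm\epsilon\one$, and applies Lemmas~\ref{gmonotone} and~\ref{gshift} to conclude $\Gamma J_2-\alpha\epsilon\one\leq\Gamma J_1\leq\Gamma J_2+\alpha\epsilon\one$. Your added remarks about the fixed point $\tv$ and the analogous treatment of $\tg$ match how the paper proceeds afterwards.
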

\begin{proof}
Given $J_1,J_2\in \R^n$ let $\epsilon=||J_1-J_2||_\infty$. Thus
\begin{align}\label{ineq}
J_2-\epsilon\one\leq J_1\leq J_2+\epsilon \one.
\end{align}
From Lemmas~\ref{gmonotone} and ~\ref{gshift} we can write
\begin{align}\label{ineq}
\Gamma J_2-\alpha \epsilon\one\leq \Gamma J_1\leq \Gamma J_2+\alpha \epsilon\one.
\end{align}
\end{proof}
One can show that the following iterative scheme in \eqref{pvi} based on the LUB projection operator $\Gamma$ in \eqref{gamdef} converges to a unique fixed point $\tv$.
\begin{align}\label{pvi}
V_{n+1}&=\Gamma V_n,\mb \forall n\geq 0.
\end{align}
\begin{lemma}\label{gfp}
 $\tv$, the unique fixed point of the iterative scheme \eqref{pvi}, obeys $\tv\geq T\tv$.
\end{lemma}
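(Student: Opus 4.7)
The plan is to leverage two facts that have already been recorded in the excerpt: (i) $\tilde{V}$ is the fixed point of $\Gamma$, so $\Gamma \tilde{V} = \tilde{V}$; and (ii) the first item of the remark following Definition~\ref{lubpop} states that $\Gamma J \geq TJ$ for every $J \in \mathbb{R}^n$. These two observations fit together immediately.

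Concretely, I would proceed in two short steps. First, recall that the $\max$-norm contraction property of $\Gamma$ (Theorem~\ref{gmaxcontra}) together with the Banach fixed-point theorem guarantees that the iteration \eqref{pvi} has a unique fixed point $\tilde{V}$, i.e.\ $\tilde{V} = \Gamma \tilde{V}$; this is already invoked in the paragraph preceding the lemma, so I would simply cite it. Second, apply the inequality $\Gamma J \geq TJ$ (justified coordinate-wise in the remark by the elementary fact that if $a \geq c$ and $b \geq c$ then $\min(a,b) \geq c$, applied to each $(\Phi r_{e_j})(i) \geq (TJ)(i)$) with the specific choice $J = \tilde{V}$. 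This gives $\Gamma \tilde{V} \geq T\tilde{V}$, and substituting $\Gamma \tilde{V} = \tilde{V}$ yields the desired conclusion $\tilde{V} \geq T\tilde{V}$.

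There is no real obstacle here; the statement is essentially a one-line corollary of the definition of $\Gamma$ combined with the fixed-point property. The only thing worth being careful about is that the inequality $\Gamma J \geq TJ$ in the remark is stated without proof, so if a self-contained presentation is desired I would briefly verify it: for each $i$, the feasibility constraint of the linear program \eqref{lubplp} (with $c = e_i$) forces $(\Phi r_{e_i})(i) \geq (TJ)(i)$, and since $(\Gamma J)(i) = \min_j (\Phi r_{e_j})(i) \geq \ldots$ — actually this requires a small adjustment, because $\Gamma J$ at coordinate $i$ is the minimum over $j$ of $(\Phi r_{e_j})(i)$, and each individual $\Phi r_{e_j}$ satisfies $\Phi r_{e_j} \geq TJ$ componentwise (in particular at coordinate $i$), so every term in the minimum dominates $(TJ)(i)$ and hence so does the minimum. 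This coordinate-wise verification is the one place where the argument, though routine, deserves to be written out, and I would include it as a one-sentence justification before concluding $\tilde{V} = \Gamma \tilde{V} \geq T\tilde{V}$.
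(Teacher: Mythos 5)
Your proposal is correct and follows essentially the same route as the paper: the paper's proof likewise observes that for $J=\tv$ each LP solution satisfies $\Phi r_{e_i}\geq T\tv$ by feasibility in \eqref{lubplp}, and then uses the fixed-point identity $\tv(i)=(\Gamma\tv)(i)=\min_j(\Phi r_{e_j})(i)$ to conclude $\tv\geq T\tv$. Your extra coordinate-wise justification of $\Gamma J\geq TJ$ is exactly the step the paper compresses into its remark, so nothing is missing.
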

\begin{proof}
Consider the $i^{th}$ linear program associated with $\Gamma \tv$. We know that $\Phi r_{e_i}\geq T \tv,\mb \forall i=1\to n$. The result follows from noting that $\tv$ is an unique fixed point of $\Gamma $ and that $\tv(i)=\underset{j=1\to n}{\min}(\Phi r_{e_j})(i)$.
\end{proof}
\begin{lemma}\label{relation1}
 $\tv$, the unique fixed point of the iterative scheme \eqref{pvi}, and the solution $\tj$ to the ALP in \eqref{alp}, obey the relation $\tj\geq\tv\geq J^*$.
\end{lemma}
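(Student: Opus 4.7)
The plan is to prove $\tj\geq\tv$ and $\tv\geq J^*$ separately. In each case the strategy is identical: start from a vector about which we already know a one-sided inequality against the relevant operator ($T$ for the lower bound, $\Gamma$ for the upper bound), iterate that operator, use its monotonicity to get a monotone chain of inequalities, and pass to the limit via the $\max$-norm contraction property, which identifies the limit with the appropriate unique fixed point.

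For the lower bound $\tv\geq J^*$, I would invoke Lemma~\ref{gfp} to get $\tv\geq T\tv$ as the starting inequality. Applying $T$ to both sides and using monotonicity of $T$ (Lemma~\ref{monotone}) inductively gives
\begin{equation*}
\tv\;\geq\;T\tv\;\geq\;T^2\tv\;\geq\;\cdots\;\geq\;T^n\tv\qquad\text{for every }n\geq 0.
\end{equation*}
Since $T$ is an $\alpha$-contraction in $\|\cdot\|_\infty$ (Lemma~\ref{contra}) with unique fixed point $J^*$, we have $T^n\tv\to J^*$ in $\max$-norm, hence componentwise. Taking the limit in $\tv\geq T^n\tv$ then yields $\tv\geq J^*$.

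For the upper bound $\tj\geq\tv$, the starting point is ALP feasibility: $\tj=\Phi\tr\geq T\tj$. The key observation is that this makes $\tr$ itself a feasible point of every one of the $n$ linear programs $\min_{r\in\chi} c^\top\Phi r$ s.t.\ $\Phi r\geq T\tj$ (with $c=e_i$, $i=1,\ldots,n$) that define $\Gamma\tj$ via \eqref{lubplp}. Hence the optimal value of the $i$-th such LP is at most $e_i^\top\Phi\tr=\tj(i)$, so $(\Gamma\tj)(i)\leq\tj(i)$ for every $i$, i.e., $\Gamma\tj\leq\tj$. By monotonicity of $\Gamma$ (Lemma~\ref{gmonotone}), $\Gamma^{n+1}\tj\leq\Gamma^n\tj$, so $\{\Gamma^n\tj\}$ is componentwise nonincreasing, and by Theorem~\ref{gmaxcontra} it converges in $\|\cdot\|_\infty$ to the unique fixed point $\tv$. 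Passing to the limit in $\Gamma^n\tj\leq\tj$ yields $\tv\leq\tj$.

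The only subtle step is the feasibility claim in the upper bound argument: we need $\tr\in\chi$ so that $\tr$ is admissible in the LP \eqref{lubplp} defining $\Gamma$. This is guaranteed by the standing convention on $\chi$ introduced around \eqref{grlp} (taking $\chi$ bounded but large enough to contain the relevant minimizers, in particular $\tr$). With that verified, both halves of the result follow from the monotonicity-plus-contraction template described above, and the chain $\tj\geq\tv\geq J^*$ is complete.
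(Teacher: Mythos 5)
Your proof is correct, and the lower-bound half ($\tv\geq J^*$ from $\tv\geq T\tv$ via monotone iteration of $T$ and its contraction to $J^*$) is exactly the argument the paper compresses into one sentence. Where you genuinely diverge is the upper bound $\tj\geq\tv$: you show directly that $\tr$ is feasible for each of the LPs in \eqref{lubplp} defining $\Gamma\tj$ (since $\Phi\tr\geq T\Phi\tr$), conclude $\Gamma\tj\leq\tj$, and then drive the monotone sequence $\Gamma^n\tj$ down to $\tv$. The paper instead introduces an auxiliary vector $V_0(i)=\min_{j}(\Phi r_j)(i)$ built from the $n$ ALP solutions for $c=e_1,\ldots,e_n$, proves $\tj\geq V_0\geq TV_0$, and then runs the same monotone-decrease-plus-convergence argument starting from $V_0$. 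Your route is more economical (one feasibility observation instead of the two-step sandwich through $V_0$, and no need to argue $V_0\geq TV_0$ for a vector outside the span of $\Phi$), while the paper's $V_0$ yields the slightly stronger chain $\tj\geq V_0\geq\tv$ collating all the single-coordinate ALP solutions. The technical caveat you flag --- that $\tr$ must lie in $\chi$ for feasibility in \eqref{lubplp} --- is real but is exactly parallel to an assumption the paper itself makes implicitly (its proof needs the solutions $r_1,\ldots,r_n$ of the ALP under $c=e_j$ to be admissible in \eqref{lubplp} as well), so your argument is on no weaker footing than the original.
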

\begin{proof}
Since $\tv\geq T\tv$ it follows that $\tv\geq J^*$. Let $\Phi r_1, \Phi r_2,\ldots,\Phi r_n$ be solutions to the ALP in \eqref{alp} for $c=e_1, e_2,\ldots,e_n$ respectively. Now consider the iterative scheme in \eqref{pvi} with $V_0(i)=\underset{j=1\to n}{\min}(\Phi r_j)(i)$. It is clear from the definition of $V_0$ that $\tj\geq V_0$. Also from monotone property of $T$ we have $\Phi r_i\geq T \Phi r_i\geq T V_0, \mb\forall i=1\to n$ and hence $V_0\geq TV_0$. Since $V_1=\Gamma V_0$, from the definition of $\Gamma$ in \eqref{gamdef} we have $V_0\geq V_1$, and recursively $V_{n}\geq V_{n+1}, \mb\forall n\geq 0$. So it follows that $\tj\geq V_0\geq V_1\ldots\geq \tv$.
\end{proof}
\begin{theorem}\label{fxpres}
Let $\tv$ be the fixed point of the iterative scheme in \eqref{pvi} and let $\bj$ be the best possible projection of $J^*$ as in Definition~\ref{bestproj}, then
\begin{align}
||J^*-\tv||_\infty\leq \frac{1}{1-\alpha}||J^*-\bj||_\infty.
\end{align}
\end{theorem}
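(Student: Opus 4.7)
The statement is the familiar distance-to-fixed-point estimate for a contraction, applied with $\Gamma$ in place of the usual operator. By Theorem~\ref{gmaxcontra} we already know that $\Gamma$ is a max-norm contraction with factor $\alpha$, and hence has a unique fixed point $\tv$ (the object whose distance from $J^*$ we want to bound). By Definition~\ref{bestproj} we also have $\bj = \Gamma J^*$. So the theorem just asks how far an arbitrary point $J^*$ sits from the fixed point of $\Gamma$, expressed via its one-step ``residual'' $\|J^* - \Gamma J^*\|_\infty = \|J^* - \bj\|_\infty$.

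The plan is to prove the general inequality
\begin{equation*}
\|J - \tv\|_\infty \;\leq\; \frac{1}{1-\alpha}\,\|J - \Gamma J\|_\infty, \qquad \forall J \in \R^n,
\end{equation*}
and then specialize to $J = J^*$. First I would write, by the triangle inequality,
\begin{equation*}
\|J - \tv\|_\infty \;\leq\; \|J - \Gamma J\|_\infty + \|\Gamma J - \tv\|_\infty.
\end{equation*}
Next, since $\tv = \Gamma \tv$, the second term equals $\|\Gamma J - \Gamma \tv\|_\infty$, which by the contraction property (Theorem~\ref{gmaxcontra}) is at most $\alpha\,\|J - \tv\|_\infty$. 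Substituting and rearranging gives $(1-\alpha)\|J - \tv\|_\infty \leq \|J - \Gamma J\|_\infty$, and dividing by $1-\alpha$ (which is positive since $\alpha \in (0,1)$) yields the displayed inequality.

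Finally I would take $J = J^*$. Since $\bj = \Gamma J^*$, the right hand side becomes $\frac{1}{1-\alpha}\,\|J^* - \bj\|_\infty$, which is exactly the desired bound. There is no real obstacle here: this is the standard contraction-mapping ``a priori'' estimate, and the only ingredient specific to our setting is Theorem~\ref{gmaxcontra}, which has already been established. The result can be viewed as the $\Gamma$-analogue of the classical bound $\|J - J^*\|_\infty \leq \tfrac{1}{1-\alpha}\|J - TJ\|_\infty$ for the Bellman operator.
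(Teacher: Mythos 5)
Your argument is correct: it uses only Theorem~\ref{gmaxcontra} (so $\tv=\Gamma\tv$ exists and $\Gamma$ is an $\alpha$-contraction in $\|\cdot\|_\infty$) together with the identity $\bj=\Gamma J^*$ from Definition~\ref{bestproj}, and the rearrangement step is legitimate since $\|J^*-\tv\|_\infty$ is finite and $\alpha\in(0,1)$. The route is, however, slightly different from the paper's. You prove the general a priori residual estimate $\|J-\tv\|_\infty\leq\frac{1}{1-\alpha}\|J-\Gamma J\|_\infty$ by a single triangle inequality, $\|J-\tv\|_\infty\leq\|J-\Gamma J\|_\infty+\|\Gamma J-\Gamma\tv\|_\infty\leq\|J-\Gamma J\|_\infty+\alpha\|J-\tv\|_\infty$, and then move the $\alpha$-term to the left; the paper instead runs the iteration \eqref{pvi} from $V_0=\bj$ and telescopes, bounding $\|J^*-\tv\|_\infty\leq\|J^*-V_0\|_\infty+\sum_{n\geq 0}\|V_n-V_{n+1}\|_\infty\leq\epsilon+\alpha\epsilon+\alpha^2\epsilon+\cdots=\epsilon/(1-\alpha)$ with $\epsilon=\|J^*-\bj\|_\infty$ and $\|V_0-V_1\|_\infty=\|\Gamma J^*-\Gamma\bj\|_\infty\leq\alpha\epsilon$. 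Both yield the same constant; yours is a bit more economical (no infinite series, no implicit appeal to $V_n\to\tv$) and states the bound for arbitrary $J$, while the paper's telescoping template is the one it reuses verbatim in Theorem~\ref{mt1}, where the one-step increment picks up the additional $\|\Gamma\bj-\tg\bj\|_\infty$ term --- so the series form is chosen there for uniformity rather than necessity.
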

\begin{proof}
Let $\epsilon=||J^*-\bj||_\infty$, and $\{V_n\},n\geq 0$ be the iterates of the scheme in \eqref{pvi} with $V_0=\bj$, then
\begin{align}
||J^*-\tv||_\infty&\leq ||J^*-V_0+V_0-V_1+V_1\ldots-\tv||_\infty\nn\\\vspace{10pt}
&\leq ||J^*-V_0||_\infty+||V_0-V_1||_\infty+||V_1-V_2||_\infty+\ldots\nn\\\vspace{10pt}
&\quad\quad(\text{Since $||V_1-V_0||_\infty=||\Gamma \bj-\Gamma J^*||_\infty\leq\alpha||\bj-J^*||_\infty$, from Theorem~\ref{gmaxcontra}})\nn\\\vspace{10pt}
&\leq \epsilon+\alpha\epsilon+\alpha^2\epsilon+\ldots\nn\\\vspace{10pt}
&=\frac{\epsilon}{1-\alpha}.
\end{align}
\end{proof}
\begin{lemma}\label{tgmonotone}
For $J_1, J_2\in \R^n$ such that $J_1\geq J_2$, we have $\tg J_1\geq \tg J_2$.
\end{lemma}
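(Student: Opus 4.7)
The plan is to mirror the proof of Lemma~\ref{gmonotone} almost verbatim, replacing the ALP-style constraint $\Phi r \geq TJ$ with the GRLP-style constraint $W^\top \Phi r \geq W^\top TJ$ of program~\eqref{alubplp}. Fix an arbitrary $i \in \{1, \ldots, n\}$ and let $r^1_{e_i}$ and $r^2_{e_i}$ denote the (unique) minimizers of the LP in~\eqref{alubplp} with $c = e_i$ corresponding to $J = J_1$ and $J = J_2$, respectively. The goal at this index is to establish $(\Phi r^1_{e_i})(i) \geq (\Phi r^2_{e_i})(i)$, and then to invoke the definition of $\tg$ and the arbitrariness of $i$.

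The key step is to show that the feasible region of the LP with $J = J_1$ is contained in that with $J = J_2$. For this I would chain together two monotonicity facts: first, by Lemma~\ref{monotone} (monotonicity of $T$), $J_1 \geq J_2$ implies $TJ_1 \geq TJ_2$; second, because Assumption~\ref{wassump} guarantees that $W$ has non-negative entries, left-multiplying the componentwise inequality by $W^\top$ preserves it, giving $W^\top TJ_1 \geq W^\top TJ_2$. Hence any $r$ satisfying $W^\top \Phi r \geq W^\top TJ_1$ automatically satisfies $W^\top \Phi r \geq W^\top TJ_2$, so the $J_1$-feasible set is a subset of the $J_2$-feasible set.

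Minimizing the same linear objective $e_i^\top \Phi r$ over a smaller feasible set can only increase the optimal value, so $e_i^\top \Phi r^1_{e_i} \geq e_i^\top \Phi r^2_{e_i}$, which is precisely $(\Phi r^1_{e_i})(i) \geq (\Phi r^2_{e_i})(i)$. Since the same inequality holds for each coordinate $j$ when we replace the objective vector with $e_j$, every term in the minimum defining $(\tg J_1)(i)$ dominates the corresponding term in $(\tg J_2)(i)$, and the componentwise minimum is preserved. Because $i$ was arbitrary, we conclude $\tg J_1 \geq \tg J_2$.

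There is no real obstacle here; the lemma is essentially a bookkeeping exercise. The one subtlety worth flagging is the reliance on the non-negativity of $W$ in Assumption~\ref{wassump}: without it, the step $TJ_1 \geq TJ_2 \implies W^\top TJ_1 \geq W^\top TJ_2$ would fail, and monotonicity of $\tg$ would not survive the constraint reduction. This is also why the companion contraction result for $\tg$ (needed later in the chain $\tg$ monotone $+$ $\tg$ shift $\Rightarrow$ $\tg$ max-norm contractive, in analogy with Theorem~\ref{gmaxcontra}) rests on exactly the same assumption.
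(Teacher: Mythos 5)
Your overall route is the same as the paper's (the paper simply says the proof follows the argument of Lemma~\ref{gmonotone}, using Assumptions~\ref{wassump} and~\ref{one}), and the core of your argument is correct: monotonicity of $T$ plus non-negativity of $W$ gives $W^\top TJ_1\geq W^\top TJ_2$, so the feasible region of \eqref{alubplp} for $J_1$ is contained in that for $J_2$, and therefore the optimal value with objective $e_i$ satisfies $(\Phi r^1_{e_i})(i)\geq(\Phi r^2_{e_i})(i)$.

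However, your final assembly step is flawed. You claim that ``every term in the minimum defining $(\tg J_1)(i)$ dominates the corresponding term in $(\tg J_2)(i)$,'' i.e.\ that $(\Phi r^1_{e_j})(i)\geq(\Phi r^2_{e_j})(i)$ for all $j$. The feasible-set-inclusion argument only controls the \emph{objective} component of each LP: for the program with objective $e_j$ it yields $(\Phi r^1_{e_j})(j)\geq(\Phi r^2_{e_j})(j)$, and says nothing about the $i$th component of $r_{e_j}$ when $j\neq i$; that termwise domination is unjustified and in general false. The correct (and simpler) way to finish, which is exactly what the paper's proof of Lemma~\ref{gmonotone} invokes, is the identity $(\tg J)(i)=(\Phi r_{e_i})(i)$: all the vectors $r_{e_j}$ lie in the \emph{same} feasible set (only the objective changes with $j$), and $r_{e_i}$ minimizes the $i$th component over that set, so the minimum over $j$ at coordinate $i$ is attained at $j=i$. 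With this, $(\tg J_1)(i)=(\Phi r^1_{e_i})(i)\geq(\Phi r^2_{e_i})(i)=(\tg J_2)(i)$, and since $i$ was arbitrary the lemma follows. So the gap is easily repaired, but as written the last step does not follow from what you proved.
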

\begin{proof}
Proof follows from Assumptions~\ref{wassump} and ~\ref{one} using arguments along the lines of Lemma~\ref{gmonotone}.
\end{proof}
\begin{lemma}\label{tgshift}
Let $J_1\in \R^n$ and $k\in \R$ be a constant. If $J_2=J_1+k\one$, then $\tg J_2=\tg J_1+\alpha k\one$.
\end{lemma}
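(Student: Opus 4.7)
The proof will mirror Lemma~\ref{gshift} almost verbatim, with the only real change being that the constraint $\Phi r\geq TJ$ is replaced by its $W$-weighted counterpart $W^\top \Phi r\geq W^\top TJ$. The crucial ingredients remain the shifting property of $T$ (Lemma~\ref{shift}) and Assumption~\ref{one}, which says $\phi_1=\one$ so that shifts of $\Phi r$ by a constant can be realized by an explicit perturbation $r\mapsto r+\beta e_1$ of the weight vector.

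The plan is as follows. Fix an arbitrary index $i\in\{1,\ldots,n\}$ and let $r^1_{e_i},r^2_{e_i}\in\chi$ denote the unique optimal solutions of the ALUB linear program \eqref{alubplp} with $c=e_i$ when $J=J_1$ and $J=J_2$ respectively. Since Lemma~\ref{shift} gives $TJ_2=TJ_1+\alpha k\one$, the constraint defining feasibility for the $\tg J_2$ program becomes
\begin{equation*}
W^\top \Phi r\;\geq\; W^\top TJ_1+\alpha k\,W^\top \one.
\end{equation*}
Using Assumption~\ref{one} to write $\one=\Phi e_1$, we observe $W^\top \one=W^\top \Phi e_1$, so that for any $r\in\R^k$,
\begin{equation*}
W^\top \Phi(r+\alpha k e_1)=W^\top \Phi r+\alpha k\,W^\top \Phi e_1=W^\top \Phi r+\alpha k\,W^\top \one.
\end{equation*}
In particular, $r^1_{e_i}+\alpha k e_1$ is feasible for the $i$-th linear program defining $\tg J_2$, and symmetrically $r^2_{e_i}-\alpha k e_1$ is feasible for the $i$-th linear program defining $\tg J_1$ (I will also assume, as in Lemma~\ref{gshift}, that the constant $\alpha k$ is small enough that the shift stays within the bounded set $\chi$, which is the same tacit assumption made there).

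Next, I mimic the uniqueness argument of Lemma~\ref{gshift}. If $r^2_{e_i}\neq r^1_{e_i}+\alpha k e_1$, then strict optimality of the unique solution forces $(\Phi r^2_{e_i})(i)<(\Phi r^1_{e_i}+\alpha k e_1)(i)$; but then $r^2_{e_i}-\alpha k e_1$ is feasible for the $i$-th LP associated with $\tg J_1$ and achieves objective value $(\Phi r^2_{e_i})(i)-\alpha k<(\Phi r^1_{e_i})(i)$, contradicting the optimality of $r^1_{e_i}$. Hence $r^2_{e_i}=r^1_{e_i}+\alpha k e_1$, which gives
\begin{equation*}
(\tg J_2)(i)=(\Phi r^2_{e_i})(i)=(\Phi r^1_{e_i})(i)+\alpha k=(\tg J_1)(i)+\alpha k.
\end{equation*}
Since $i$ was arbitrary, $\tg J_2=\tg J_1+\alpha k\one$.

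I do not expect any real obstacle here: the only subtlety is the one already present in Lemma~\ref{gshift}, namely reliance on uniqueness of the LP optimum to push the shift through in both directions. If the ALUB LP can have non-unique optima one can either assume a tie-breaking rule (as is implicit in the earlier lemma) or weaken the conclusion to ``some optimal $r^2_{e_i}$ equals $r^1_{e_i}+\alpha k e_1$'', which still yields the stated identity at the level of $(\tg J)(i)$.
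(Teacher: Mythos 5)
Your proof is correct and follows essentially the same route as the paper, which simply invokes Assumptions~\ref{wassump}--\ref{one} and ``arguments along the lines of Lemma~\ref{gshift}''; you have spelled out exactly that argument, with the $W$-weighted constraint shifted via $W^\top\one=W^\top\Phi e_1$ and the same uniqueness-based contradiction. Your remarks on the tacit issues (the shift staying in $\chi$, possible non-uniqueness of the LP optimum) are reasonable and apply equally to the paper's own Lemma~\ref{gshift}.
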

\begin{proof}
Proof follows from Assumption~\ref{wassump} and~\ref{one} using arguments along the lines of Lemma~\ref{gshift}.
\end{proof}
\begin{theorem}\label{tgmaxcontra}
The operator $\tg \colon \R^n\ra \R^n$ obeys the $\max$-norm contraction property with factor $\alpha$ and the following iterative scheme based on the ALUB projection operator $\tg$, see \eqref{apvi}, converges to a unique fixed point $\hv$.
\begin{align}\label{apvi}
V_{n+1}&=\tg V_n,\mb\forall n\geq 0.
\end{align}
\end{theorem}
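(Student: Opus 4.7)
The plan is to mimic, almost verbatim, the argument already used for Theorem~\ref{gmaxcontra}, exploiting the fact that the two preliminary lemmas needed for $\Gamma$ (monotonicity and shift-equivariance) have exact analogues for $\tg$ in Lemmas~\ref{tgmonotone} and~\ref{tgshift}. The structural difference between $\Gamma$ and $\tg$ lies only in the constraint matrix of the inner linear programs (\eqref{lubplp} vs.\ \eqref{alubplp}), and the contraction argument never looks inside those LPs — it only uses how $\tg$ behaves under pointwise order and under uniform shifts by multiples of $\one$. So the same sandwich trick will go through.

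First I would prove the $\max$-norm contraction. Fix $J_1,J_2\in\R^n$ and set $\epsilon=\|J_1-J_2\|_\infty$, so that
\begin{equation*}
J_2-\epsilon\one \;\leq\; J_1 \;\leq\; J_2+\epsilon\one.
\end{equation*}
Applying monotonicity of $\tg$ (Lemma~\ref{tgmonotone}) yields
\begin{equation*}
\tg(J_2-\epsilon\one) \;\leq\; \tg J_1 \;\leq\; \tg(J_2+\epsilon\one),
\end{equation*}
and the shift identity (Lemma~\ref{tgshift}) rewrites the outer terms as $\tg J_2 \mp \alpha\epsilon\one$, giving
\begin{equation*}
\tg J_2-\alpha\epsilon\one \;\leq\; \tg J_1 \;\leq\; \tg J_2+\alpha\epsilon\one,
\end{equation*}
which is exactly $\|\tg J_1-\tg J_2\|_\infty\leq\alpha\|J_1-J_2\|_\infty$. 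This handles the contraction claim.

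For the second half, once $\tg$ is shown to be an $\alpha$-contraction on $(\R^n,\|\cdot\|_\infty)$, existence of a unique fixed point $\hv$ and convergence of the iterates $V_{n+1}=\tg V_n$ from any initial $V_0$ follow immediately from the Banach contraction mapping theorem, since $(\R^n,\|\cdot\|_\infty)$ is a complete metric space. I would state this as a one-line appeal rather than reprove Banach, in keeping with how the analogous fact is treated for $\Gamma$ in \eqref{pvi}.

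There is essentially no hard step here; the only thing worth being careful about is confirming that $\tg$ is well-defined pointwise, i.e., that each of the $n$ linear programs in Definition~\ref{alubpop} has a unique solution in the bounded set $\chi$, so that Lemmas~\ref{tgmonotone} and~\ref{tgshift} apply without ambiguity. This is inherited from the setup of the GRLP under Assumption~\ref{wassump} (full-rank $W$ with non-negative entries and $\one$ in the span of $\Phi$), exactly as in the corresponding discussion for $\Gamma$. Everything else is a direct transcription of the $\Gamma$-proof with $\Gamma$ replaced by $\tg$.
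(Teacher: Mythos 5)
Your proposal is correct and follows essentially the same route as the paper, which proves Theorem~\ref{tgmaxcontra} by repeating the sandwich argument of Theorem~\ref{gmaxcontra} with Lemmas~\ref{tgmonotone} and~\ref{tgshift} in place of their $\Gamma$-counterparts, the fixed-point claim then following from the contraction property on the complete space $(\R^n,\|\cdot\|_\infty)$. Your added remark on well-definedness of the inner linear programs is a reasonable precaution but introduces nothing beyond what the paper's setup already assumes.
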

\begin{proof}
Follows on similar lines of proof of Theorem~\ref{gmaxcontra}.
\end{proof}
\begin{lemma}\label{relation2}
The unique fixed point $\hv$ of the iteration in \eqref{apvi} and the solution $\hj$ of the GRLP obey $\hj\geq\hv$.
\end{lemma}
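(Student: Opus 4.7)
The plan is to mirror the construction used in Lemma \ref{relation1}, but swapping the ALP for the GRLP and $\Gamma$ for $\tg$. First I would solve the GRLP with objective $c=e_i$ for each $i=1,\ldots,n$ (over the same bounded set $\chi$), calling the resulting solutions $\Phi\hat r_1,\ldots,\Phi\hat r_n$. Since $\hj=\Phi\hr$ is feasible for every one of these GRLPs, its $i$-th coordinate must dominate the optimum of the $i$-th program, yielding $\hj(i)\geq(\Phi\hat r_i)(i)$ for every $i$. Defining $V_0(i)\stackrel{\Delta}{=}\min_{j=1,\ldots,n}(\Phi\hat r_j)(i)$ then gives the starting inequality $\hj\geq V_0$.

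Next I would show that $V_0\geq\tg V_0$, so that the iteration $V_{n+1}=\tg V_n$ is monotonically non-increasing. The crucial observation is that each $\Phi\hat r_j$ is feasible not only for its own GRLP but also for every LP appearing in the definition of $\tg V_0$. Indeed, $\Phi\hat r_j\geq V_0$ by construction of $V_0$, and combining Lemma \ref{monotone} with the non-negativity of $W$ from Assumption \ref{wassump} gives
\[
W^\top\Phi\hat r_j\;\geq\;W^\top T\Phi\hat r_j\;\geq\;W^\top TV_0.
\]
Consequently, for each $i$, the $i$-th LP defining $\tg V_0$ admits $\Phi\hat r_j$ as a feasible point, so its optimal value $(\Phi\tilde r_i)(i)$ is at most $(\Phi\hat r_j)(i)$ for every $j$. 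Taking the min over $j$ and noting $(\tg V_0)(i)\leq(\Phi\tilde r_i)(i)$ yields $(\tg V_0)(i)\leq V_0(i)$ at every coordinate, i.e., $V_0\geq\tg V_0=V_1$.

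Finally, monotonicity of $\tg$ (Lemma \ref{tgmonotone}) applied inductively produces $V_0\geq V_1\geq V_2\geq\cdots$, and since $\tg$ is a $\max$-norm contraction (Theorem \ref{tgmaxcontra}) the sequence converges to the unique fixed point $\hv$, giving $V_n\geq\hv$ for all $n$. Chaining this with $\hj\geq V_0$ delivers $\hj\geq\hv$. The only delicate point—really a bookkeeping check rather than an obstacle—is verifying the feasibility transfer from the GRLP to the $\tg V_0$ programs, which is exactly where the non-negativity of $W$ is used so that $W^\top$ preserves the required inequalities; everything else is an adaptation of the ALP-side argument from Lemma \ref{relation1}.
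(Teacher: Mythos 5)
Your proposal is correct and is exactly the adaptation the paper intends: its own proof of Lemma~\ref{relation2} simply says the argument follows as in Lemma~\ref{relation1}, and your writeup carries out that adaptation faithfully (solve the $e_i$-objective GRLPs, form the componentwise minimum $V_0$, show $\hj\geq V_0\geq \tg V_0$, then iterate down to $\hv$), correctly identifying the non-negativity of $W$ from Assumption~\ref{wassump} as the point where feasibility transfers to the programs defining $\tg V_0$.
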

\begin{proof}
Follows in a similar manner as the proof for Lemma~\ref{relation1}.
\end{proof}
\begin{theorem}\label{mt1}
Let $\hv$ be the fixed point of the iterative scheme in \eqref{apvi} and let $\bj$ be the best possible approximation of $J^*$ as in Definition~\ref{bestproj}, then
\begin{align}
||J^*-\hv||_\infty\leq \frac{||J^*-\bj||_\infty+||\Gamma \bj-\tg\bj||_\infty}{1-\alpha}.
\end{align}
\end{theorem}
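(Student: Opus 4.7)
The plan is to adapt the telescoping argument used in the proof of Theorem~\ref{fxpres} to the ALUB iteration, with the extra step of inserting $\Gamma \bj$ via the triangle inequality so that the discrepancy between the two operators appears explicitly. Concretely, I would consider the iterates $V_{n+1} = \tg V_n$ with the initialization $V_0 = \bj$. Since $\hv$ is the unique fixed point of $\tg$ (Theorem~\ref{tgmaxcontra}), $V_n \to \hv$, so the telescoping bound
\begin{align*}
||J^* - \hv||_\infty \leq ||J^* - V_0||_\infty + \sum_{n=0}^\infty ||V_n - V_{n+1}||_\infty
\end{align*}
applies. The first term is simply $||J^* - \bj||_\infty$.

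For the sum, I would use the $\max$-norm contraction of $\tg$ (Theorem~\ref{tgmaxcontra}) to reduce every difference to the first one: $||V_n - V_{n+1}||_\infty \leq \alpha^n ||V_0 - V_1||_\infty = \alpha^n ||\bj - \tg \bj||_\infty$. The key step, and the place where both operators meet, is bounding this first difference. I would insert $\Gamma \bj$ via the triangle inequality
\begin{align*}
||\bj - \tg \bj||_\infty \leq ||\bj - \Gamma \bj||_\infty + ||\Gamma \bj - \tg \bj||_\infty.
\end{align*}
The second summand is exactly the term $||\Gamma \bj - \tg \bj||_\infty$ that appears in the statement. For the first summand, recall that $\bj = \Gamma J^*$, so $||\bj - \Gamma \bj||_\infty = ||\Gamma J^* - \Gamma \bj||_\infty$, and the $\max$-norm contraction of $\Gamma$ (Theorem~\ref{gmaxcontra}) gives $||\Gamma J^* - \Gamma \bj||_\infty \leq \alpha ||J^* - \bj||_\infty$.

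Putting the pieces together and summing the geometric series $\sum_{n\geq 0} \alpha^n = 1/(1-\alpha)$ yields
\begin{align*}
||J^* - \hv||_\infty \leq ||J^* - \bj||_\infty + \frac{\alpha ||J^* - \bj||_\infty + ||\Gamma \bj - \tg \bj||_\infty}{1 - \alpha},
\end{align*}
and a single-line algebraic simplification collapses the $||J^* - \bj||_\infty$ coefficients to give the stated bound. The argument is essentially routine once the triangle-inequality insertion of $\Gamma \bj$ is made; the only real conceptual move is recognizing that $\bj$ is itself the image of $J^*$ under $\Gamma$, which lets the contraction of $\Gamma$ absorb the $||\bj - \Gamma \bj||_\infty$ term into $||J^* - \bj||_\infty$ rather than leaving a second-order residual. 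No new lemmas appear to be needed beyond what the appendix has already established.
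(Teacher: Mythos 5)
Your proposal is correct and follows essentially the same route as the paper's own proof: initialize the $\tg$-iteration at $V_0=\bj$, telescope, bound $||\bj-\tg\bj||_\infty$ by inserting $\Gamma\bj$ and using $\bj=\Gamma J^*$ with the contraction of $\Gamma$, then sum the geometric series. The algebra collapses to the stated bound exactly as in the appendix.
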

\begin{proof}
Let $\epsilon=||J^*-\bj||_\infty$, and $\{V_n\},n\geq 0$ be the iterates of the scheme in \eqref{apvi} with $V_0=\bj$, then
\begin{align}
||\bj-\tg\bj||_\infty&\leq||\bj-\Gamma \bj||_\infty+||\Gamma\bj-\tg\bj||_\infty\nn\\\vspace{10pt}
&=||\Gamma J^*-\Gamma \bj||_\infty+||\Gamma\bj-\tg\bj||_\infty\nn\\\vspace{10pt}
&\leq \alpha\epsilon+\beta,
\end{align}
where $\beta=||\Gamma \bj-\tg\bj||_\infty$. Now
\begin{align}
||J^*-\hv||_\infty&\leq ||J^*-V_0+V_0-V_1+V_1\ldots-\hv||_\infty\nn\\\vspace{10pt}
&\leq ||J^*-V_0||_\infty+||V_0-V_1||_\infty+||V_1-V_2||_\infty+\ldots\nn\\\vspace{10pt}
&=||J^*-V_0||_\infty+||V_0-V_1||_\infty+||\tg V_0-\tg V_1||_\infty+\ldots\nn\\\vspace{10pt}
&\leq \epsilon+(\beta+\alpha\epsilon)+\alpha(\beta+\alpha\epsilon)+\ldots\nn\\\vspace{10pt}
&=\frac{\epsilon+\beta}{1-\alpha}.
\end{align}
\end{proof}
\begin{theorem}\label{cmt1}
Let $\hv$, $\bj$ be as in Theorem~\ref{mt1} and let $r^*\stackrel{\Delta}{=}\arg\min_{r\in \R^k}||J^*-\Phi r||_\infty$ then
\begin{align}
||J^*-\hv||_\infty\leq \frac{2||J^*-\Phi r^*||_\infty+||\Gamma \bj-\tg\bj||_\infty}{1-\alpha}.
\end{align}
\end{theorem}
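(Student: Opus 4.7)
The plan is direct: Theorem~\ref{cmt1} is essentially the composition of Theorem~\ref{mt1} with the a-priori bound in Lemma~\ref{bestbnd}. Theorem~\ref{mt1} already provides the estimate
\[
||J^*-\hv||_\infty \leq \frac{||J^*-\bj||_\infty+||\Gamma \bj-\tg\bj||_\infty}{1-\alpha},
\]
so the only remaining task is to replace $||J^*-\bj||_\infty$ by the more interpretable quantity $2||J^*-\Phi r^*||_\infty$. This is precisely what Lemma~\ref{bestbnd} delivers, so the proof will consist of citing these two results and substituting.

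For transparency I would briefly re-derive the bound $||J^*-\bj||_\infty \le 2||J^*-\Phi r^*||_\infty$ used in Lemma~\ref{bestbnd}. Let $\epsilon_0 = ||J^*-\Phi r^*||_\infty$. By Assumption~\ref{one}, $\one$ lies in the column span of $\Phi$, so the vector $\Phi r^* + \epsilon_0 \one$ equals $\Phi r'$ for some $r' \in \R^k$. Since $\Phi r^* \geq J^*-\epsilon_0 \one$, we have $\Phi r' \geq J^* = TJ^*$, so $\Phi r'$ is feasible for the LUB linear program \eqref{lubplp} at $J = J^*$ for every choice of $c$. Hence $(\bj)(i) = (\Gamma J^*)(i) \leq (\Phi r')(i)$ for each $i$, giving $\bj \leq \Phi r^* + \epsilon_0 \one$ componentwise. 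Together with $\bj \geq TJ^* = J^*$ (observed in the first remark following Definition~\ref{lubpop}), this yields $||J^*-\bj||_\infty \leq 2\epsilon_0$.

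Substituting this into the bound from Theorem~\ref{mt1} produces exactly the claim. There is no real obstacle here; the heavy lifting has already been done in establishing that $\Gamma$ and $\tg$ are $\max$-norm contractions with factor $\alpha$ (Theorems~\ref{gmaxcontra} and~\ref{tgmaxcontra}) and that $\hj \geq \hv$ (Lemma~\ref{relation2}). Theorem~\ref{cmt1} simply repackages the fixed-point bound of Theorem~\ref{mt1} in terms of the best $\infty$-norm linear approximation error $||J^*-\Phi r^*||_\infty$, a quantity that is controllable via good feature design and that appears naturally in the final Theorem~\ref{mr1} bound.
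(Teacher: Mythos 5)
Your proposal is correct and matches the paper's own proof, which likewise obtains Theorem~\ref{cmt1} by substituting the bound $||J^*-\bj||_\infty\leq 2||J^*-\Phi r^*||_\infty$ from Lemma~\ref{bestbnd} into Theorem~\ref{mt1}. Your added re-derivation of Lemma~\ref{bestbnd} (feasibility of $\Phi r^*+\epsilon_0\one$ together with $\Gamma J^*\geq TJ^*=J^*$) is consistent with the paper's argument for that lemma.
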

\begin{proof}
The result is obtained by using Lemma~\ref{bestbnd} to replace the term $||J^*-\bj||_\infty$ in Theorem~\ref{mt1}.
\end{proof}
\begin{lemma}\label{srw}
$\hat{r} \in \R^k$ is a solution to GRLP in \eqref{grlp} iff it solves the following program:
\begin{align}\label{grlpeqprog}
\min_{r\in \chi} &||\Phi r-\hv||_{1,c}\nn\\
\text{s.t}\mb & W^\top \Phi r\geq W^\top T \Phi r.
\end{align}
\end{lemma}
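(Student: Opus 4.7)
The plan is to show that the two optimization problems have objectives that differ only by an additive constant on the common feasible region, hence have identical sets of minimizers. The key enabling fact is that every point in the feasible region dominates $\hv$ componentwise; once that is established, the absolute value in $\|\Phi r - \hv\|_{1,c}$ can be dropped and the objective becomes $c^\top \Phi r - c^\top \hv$.

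First I would prove the domination claim: for any $r \in \chi$ with $W^\top \Phi r \geq W^\top T \Phi r$, we have $\Phi r \geq \hv$. Starting the iteration \eqref{apvi} from $V_0 := \Phi r$, I would verify that $\tg V_0 \leq V_0$. Indeed, for each coordinate $i$, $(\tg V_0)(i) = (\Phi r''_{e_i})(i)$ where $r''_{e_i}$ minimizes $(\Phi r')(i)$ over $r' \in \chi$ subject to $W^\top \Phi r' \geq W^\top T \Phi r$. Since $r$ itself is feasible for this LP, $(\Phi r''_{e_i})(i) \leq (\Phi r)(i)$, giving $\tg V_0 \leq V_0$. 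Applying Lemma~\ref{tgmonotone} inductively yields a nonincreasing sequence $V_{n+1} = \tg V_n \leq V_n$, and by Theorem~\ref{tgmaxcontra} this sequence converges to $\hv$. Hence $\Phi r = V_0 \geq \hv$.

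Having established $\Phi r \geq \hv$ for every feasible $r$, the objective of \eqref{grlpeqprog} simplifies:
\begin{align*}
\|\Phi r - \hv\|_{1,c} = \sum_{i=1}^n c(i)\bigl((\Phi r)(i) - \hv(i)\bigr) = c^\top \Phi r - c^\top \hv.
\end{align*}
Since $c^\top \hv$ does not depend on $r$ and the feasible region is identical, minimizing $\|\Phi r - \hv\|_{1,c}$ over this region is equivalent to minimizing $c^\top \Phi r$ over the same region. Consequently the two programs share the same set of optimizers, which proves both directions of the iff.

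The main obstacle is the domination claim $\Phi r \geq \hv$ for all feasible $r$, since it is a uniform statement over the entire feasible region rather than just for the GRLP optimum $\hj$ (which is all that Lemma~\ref{relation2} gives). The resolution is the observation above that feasibility of $r$ for the GRLP constraint makes $r$ automatically feasible for each of the $n$ ALUB-defining linear programs with $J = \Phi r$, so one coordinate-wise comparison suffices to launch a monotone iteration that converges to $\hv$.
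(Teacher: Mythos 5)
Your proof is correct, and its core reduction is the one the paper uses: dominate $\hv$ componentwise, drop the absolute value so the objective of \eqref{grlpeqprog} becomes $c^\top \Phi r - c^\top \hv$, and conclude that the two programs have the same feasible set and objectives differing by a constant, hence the same minimizers. Where you genuinely differ is in the supporting fact. The paper's proof cites only Lemma~\ref{relation2}, i.e., $\hj\geq\hv$ (domination at the GRLP optimum), and then asserts the identity $||\Phi r-\hv||_{1,c}=c^\top\Phi r-c^\top\hv$, which as stated needs $\Phi r\geq\hv$ for \emph{every} feasible $r$; that uniform domination is exactly what you prove, by launching the iteration \eqref{apvi} from $V_0=\Phi r$, noting that feasibility of $r$ for the GRLP constraint makes $r$ feasible for each coordinate LP defining $\tg V_0$ so that $\tg V_0\leq V_0$, and then driving the monotone sequence down to $\hv$ via Lemma~\ref{tgmonotone} and Theorem~\ref{tgmaxcontra}. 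So your argument is more complete than the paper's terse one (and it recovers Lemma~\ref{relation2} as the special case $r=\hr$). An alternative patch that stays closer to the paper's two lines: for any feasible $r$, $||\Phi r-\hv||_{1,c}\geq c^\top\Phi r-c^\top\hv\geq c^\top\Phi\hr-c^\top\hv=||\hj-\hv||_{1,c}$, where the last equality uses only $\hj\geq\hv$; chaining these inequalities in both directions yields the iff without uniform domination. Your route costs the extra iteration argument but buys the stronger structural fact that every GRLP-feasible $\Phi r$ upper bounds $\hv$.
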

\begin{proof}
We know from Lemma~\ref{relation2} that $\hj\geq\hv$, and thus minimizing $||\Phi r-\hv||_{1,c}=\sum_{i=1}^n c(i) |(\Phi r)(i)-\hv(i)|=c^\top \Phi r-c^\top \hv$, is same as minimizing $c^\top \Phi r$.
\end{proof}
\begin{theorem}\label{mt2}
Let $\hv$ be the solution to the iterative scheme in \eqref{apvi} and let $\hj=\Phi \hr$ be the solution to the GRLP. Let $\bj$ be the best possible approximation to $J^*$ as in Definition~\ref{bestproj}, and $||\Gamma \bj -\tg\bj||_\infty$ be the error due to ALUB projection and let $r^*\stackrel{\Delta}{=}\arg\min||J^*-\Phi r||_\infty$, then
\begin{align}
||\hj-\hv||_{1,c}\leq\frac{4||J^*-\Phi r^*||_\infty+||\Gamma\bj-\tg\bj||_\infty}{1-\alpha}.
\end{align}
\end{theorem}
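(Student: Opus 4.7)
The plan is to exploit the reformulation of Lemma~\ref{srw}, which tells us that $\hr$ is an $\ell_{1,c}$--nearest point to $\hv$ among vectors $\Phi r$ that are feasible for the GRLP. This lets me upper bound $\|\Phi\hr-\hv\|_{1,c}$ by $\|\Phi r-\hv\|_{1,c}$ for \emph{any} convenient GRLP--feasible $r$, and the natural choice is the ALP optimizer $\tr$, since by Lemma~\ref{subset} every ALP--feasible vector is also GRLP--feasible. Thus
\begin{align}
\|\Phi\hr-\hv\|_{1,c}\;\leq\;\|\Phi\tr-\hv\|_{1,c}\;=\;\|\tj-\hv\|_{1,c}. \nonumber
\end{align}

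Next I would insert $J^*$ and apply the triangle inequality to split the right--hand side into an ``ALP error'' piece and a ``$\hv$--error'' piece:
\begin{align}
\|\tj-\hv\|_{1,c}\;\leq\;\|\tj-J^*\|_{1,c}+\|J^*-\hv\|_{1,c}. \nonumber
\end{align}
The first term is immediately controlled by the classical ALP bound of Theorem~\ref{restate}, which yields $\|\tj-J^*\|_{1,c}\leq \tfrac{2}{1-\alpha}\|J^*-\Phi r^*\|_\infty$. For the second term I would use $\|\cdot\|_{1,c}\leq\|\cdot\|_\infty$ (since $c$ is a probability distribution, by Assumption~\ref{probdist}) and invoke the already-proved Theorem~\ref{cmt1}, which gives
\begin{align}
\|J^*-\hv\|_{1,c}\;\leq\;\|J^*-\hv\|_\infty\;\leq\;\frac{2\|J^*-\Phi r^*\|_\infty+\|\Gamma\bj-\tg\bj\|_\infty}{1-\alpha}. \nonumber
\end{align}
Adding the two pieces and collecting constants produces exactly $\tfrac{4\|J^*-\Phi r^*\|_\infty+\|\Gamma\bj-\tg\bj\|_\infty}{1-\alpha}$.

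The argument is almost entirely bookkeeping once the right comparator is chosen; the genuinely non-trivial input is that the ALP optimizer $\tr$ is both (i) feasible for the GRLP (Lemma~\ref{subset}, which depends on $W\geq 0$) and (ii) already known to track $J^*$ in the $\ell_{1,c}$ norm (Theorem~\ref{restate}, which needs $\one\in\mathrm{span}(\Phi)$ by Assumption~\ref{one}). The main conceptual obstacle is not the inequality chain itself but the prerequisite reformulation of the GRLP minimum as an $\ell_{1,c}$ projection onto a feasible slab, i.e.\ Lemma~\ref{srw}; without it there is no reason the minimizer of $c^\top\Phi r$ should be close to $\hv$. Once Lemma~\ref{srw} is in hand, comparing with $\tr$ rather than with a shifted version of $\Phi r^*$ is what avoids an extra $\tfrac{2\epsilon}{1-\alpha}$ slack term and gives the stated constant $4$.
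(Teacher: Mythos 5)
Your proof is correct and yields exactly the stated constant, but it routes the key step differently from the paper. Both arguments share the same skeleton: invoke Lemma~\ref{srw} to replace the GRLP objective by an $\ell_{1,c}$-projection onto $\hv$, compare $\hj$ against a GRLP-feasible point supplied by Lemma~\ref{subset}, split through $J^*$ by the triangle inequality, and finish the $\|J^*-\hv\|$ piece with Theorem~\ref{cmt1}. The difference is the comparator: you take the ALP solution $\tj=\Phi\tr$ and control $\|\tj-J^*\|_{1,c}$ by citing the de Farias--Van Roy bound (Theorem~\ref{restate}) as a black box, whereas the paper constructs an explicit ALP-feasible point $\Phi r'=\Phi r^*+\frac{(1+\alpha)\gamma}{1-\alpha}\one$ (with $\gamma=\|J^*-\Phi r^*\|_\infty$) and shows directly that $\|\Phi r'-J^*\|_\infty\leq\frac{2\gamma}{1-\alpha}$, keeping the whole argument self-contained in the sup norm. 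Your version is shorter and works entirely with quantities already bounded elsewhere; the paper's version avoids any dependence on Theorem~\ref{restate}, which it otherwise only quotes as background. Two small remarks. First, your closing claim that comparing with $\tr$ rather than a shifted $\Phi r^*$ ``avoids an extra $\frac{2\epsilon}{1-\alpha}$ slack'' is not accurate: the paper's shifted comparator satisfies $\|\Phi r'-J^*\|_\infty\leq\frac{2\gamma}{1-\alpha}$, i.e.\ exactly the same contribution, so both comparators give the constant $4$. Second, strictly speaking the comparison via Lemma~\ref{srw} requires the comparator to lie in the bounded set $\chi$ appearing in the GRLP; you do not check $\tr\in\chi$, but the paper commits the same omission for $r'$, so this is a shared (minor) gap rather than a defect specific to your argument.
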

\begin{proof}
Let $\gamma=||J^*-\Phi r^*||_\infty$, then  it is easy to see that
\begin{align}
||J^*-T\Phi r^*||_\infty&=||TJ^*-T\Phi r^*||_\infty\leq\alpha\gamma,\mb\text{and}\nn\\\vspace{10pt}
||T\Phi r^*-\Phi r^*||_\infty&\leq(1+\alpha)\gamma.
\end{align}
From Assumption~\ref{one} there exists $r'\in \R^k$ such that $\Phi r'=\Phi r^*+\frac{(1+\alpha)\gamma}{1-\alpha}\one$ and $r'$ is feasible to the ALP. Now
\begin{align}
||\Phi r'-J^*||_\infty\leq ||\Phi r^* -J^*||_\infty+||\Phi r'-\Phi r^*||_\infty\leq \gamma+\frac{(1+\alpha)\gamma}{1-\alpha}=\frac{2\gamma}{1-\alpha}.
\end{align}
Since $r'$ is also feasible for GRLP in \eqref{grlp} we have
\begin{align}
||\hj-\hv||_{1,c}&\leq||\Phi r'-\hv||_{1,c}\nn\\\vspace{10pt}
&\leq||\Phi r'-\hv||_\infty\mb\text{(Since $c$ is a distribution)}\nn\\\vspace{10pt}
&\leq||\Phi r'-J^*||_\infty+||J^*-\hv||_\infty \mb \quad\quad\text{(From Corollary~\ref{cmt1} we have)}\nn\\\vspace{10pt}
&\leq\frac{4\gamma+\beta}{1-\alpha}
\end{align}
\end{proof}
\begin{corollary}\label{cmt2}
Let $\hj$, $\hv$, $r^*$ and $J^*$ be as in Theorem~\ref{mt2}, then
\begin{align}\label{finalbnd}
||J^*-\hj||_{1,c}\leq\frac{6 ||J^*-\Phi r^*||_\infty+2||\Gamma\bj-\tg\bj||_\infty}{1-\alpha}.
\end{align}
\end{corollary}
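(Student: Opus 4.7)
The plan is to derive the corollary from the two main ingredients already established just above it: Theorem~\ref{mt2}, which controls $\|\hj-\hv\|_{1,c}$, and Corollary~\ref{cmt1}, which controls $\|J^*-\hv\|_\infty$. The fixed point $\hv$ of $\tg$ acts as the natural bridge between the optimal value function $J^*$ and the GRLP solution $\hj$, so a single triangle inequality should do the work.

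First I would split the target quantity as
\begin{equation*}
\|J^*-\hj\|_{1,c}\;\le\;\|J^*-\hv\|_{1,c}+\|\hv-\hj\|_{1,c},
\end{equation*}
which is valid because $\|\cdot\|_{1,c}$ is a norm (Assumption~\ref{probdist} makes $c$ a probability distribution, hence in particular nonnegative). Next I would dominate the first summand by its $\infty$-norm counterpart: since $c$ is a probability distribution, $\|x\|_{1,c}=\sum_i c(i)|x(i)|\le\|x\|_\infty$ for every $x\in\R^n$. This gives $\|J^*-\hv\|_{1,c}\le\|J^*-\hv\|_\infty$, and Corollary~\ref{cmt1} then yields the bound $(2\|J^*-\Phi r^*\|_\infty+\|\Gamma\bj-\tg\bj\|_\infty)/(1-\alpha)$ for this term.

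For the second summand I would invoke Theorem~\ref{mt2} directly, which gives
\begin{equation*}
\|\hj-\hv\|_{1,c}\le\frac{4\|J^*-\Phi r^*\|_\infty+\|\Gamma\bj-\tg\bj\|_\infty}{1-\alpha}.
\end{equation*}
Adding the two bounds — the numerators combine as $(2+4)\|J^*-\Phi r^*\|_\infty+(1+1)\|\Gamma\bj-\tg\bj\|_\infty$ — produces exactly the stated $(6\|J^*-\Phi r^*\|_\infty+2\|\Gamma\bj-\tg\bj\|_\infty)/(1-\alpha)$.

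There is essentially no obstacle here: the corollary is a cosmetic repackaging once Theorem~\ref{mt2} and Corollary~\ref{cmt1} are in hand. The only thing worth double-checking is the passage from $\|\cdot\|_{1,c}$ to $\|\cdot\|_\infty$ on the first term, which depends crucially on Assumption~\ref{probdist}; all of the conceptual difficulty (establishing contraction of $\Gamma$ and $\tg$, relating $\hj$ to $\hv$ via Lemma~\ref{srw}, and absorbing the extra $\frac{(1+\alpha)\gamma}{1-\alpha}\one$ shift to manufacture an ALP-feasible $r'$) has already been shouldered by Theorems~\ref{gmaxcontra}, \ref{tgmaxcontra}, \ref{mt1}, and \ref{mt2}.
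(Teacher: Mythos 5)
Your proposal is correct and matches the paper's own proof essentially verbatim: the same triangle inequality through $\hv$, the same domination $\|J^*-\hv\|_{1,c}\le\|J^*-\hv\|_\infty$ using that $c$ is a probability distribution, and the same invocation of Theorem~\ref{cmt1} for the first term and Theorem~\ref{mt2} for the second, with the numerators combining to $6\|J^*-\Phi r^*\|_\infty+2\|\Gamma\bj-\tg\bj\|_\infty$.
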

\begin{proof}
\begin{align}
||J^*-\hj||_{1,c}&\leq||J^*-\hv||_{1,c}+||\hv-\hj||_{1,c}\nn\\\vspace{10pt}
&\leq||J^*-\hv||_\infty+||\hv-\hj||_{1,c}\nn
\end{align}
\end{proof}	
The result is obtained by using Corollary~\ref{cmt1} for the first term and Theorem~\ref{mt2} for the second term in the above inequality.\\\vspace{10pt}
\section{Numerical Example of Single Queue with Finite Buffer size and Controlled Service Rates}
The problem setting we consider is similar to the one presented in Sections~$5.2$ and $6.1$ in \cite{ALP}. However, we provide the most important details in this section so as to make the material self contained.\\
We consider a single queue with finite buffer size where the maximum allowed queue length is $n-1$. The queue evolves in discrete instants of time $t=0,1,\ldots$ with only one of the following mutually exclusive events occurring between $t$ and $t+1$
\begin{itemize}
\item A job arrives with probability $p$.
\item A job gets served and leaves the queue with probability $q(a)$. Here $a \in A=\{1,\ldots,d\}$ is an action.
\end{itemize}
It is understood that excess jobs (i.e., jobs arriving when the queue length is $n-1$) will be discarded.\\
Formally, the dynamics of the controlled queuing system can be described via the framework of Markov Decision Process (MDP). The state space is given by $S=\{0,1,\ldots,n-1\}$ and denotes the number of jobs waiting in the queue. The action set is given by $A=\{1,\ldots,d\}$ and controls the probability of a job getting serviced and leaving the queue. We let $s_t$ denote the state at time $t$. At time $t$, for $0<s_t<n-1$, the state at time $t+1$ when action $a_t \in A $ is chosen is given by
\begin{align}\label{sysdef}
	s_{t+1}	&= s_{t}+1, \mb\text{with probability}\mb p,\nn\\\vspace{10pt}
		&= s_{t}-1, \mb\text{with probability}\mb q(a_t) ,\nn\\\vspace{10pt}
		&= s_t, \mb\text{with probability}\mb (1-p-q(a_t)).
\end{align}
For states $s_t=0$ and $s_t=n-1$ the system dynamics is given by
\begin{align}\label{sysdef1}
	s_{t+1}	&= s_{t}+1, \mb\text{with probability}\mb p,\mb\text{when}\mb s_t=0\nn\\\vspace{10pt}
		&= s_{t}-1, \mb\text{with probability}\mb q_{a_t},\mb\text{when}\mb s_t=n-1.
\end{align}
In order to ensure `stabilizability', we assume the following condition on the system:
\begin{align}
0<q(1)\leq \ldots\leq q(d)<1,\mb\text{ where}\mb q(d)>p.
\end{align}
Note that the above state transition description in \eqref{sysdef} and \eqref{sysdef1} has been presented in a concise format in Section~\ref{exmp}.
 The reward associated with the action $a \in A$ in state $s\in S$ is given by
\begin{align}\label{rew}
g_a(s)=-(s+60q(a)^3).
\end{align}
The reward function is negative in queue length since it is desirable to penalize higher queue length. One can also observe that \eqref{rew} penalizes actions that offer higher level of service.
\section{Solution via GRLP }\label{ALPmethods}
We present the solution methodology in Figure~\ref{sbsp}.
\begin{figure}[h!]
%\resizebox{columnwidth}{2cm}{
\begin{tikzpicture}[domain=-10:7.7,scale=0.7,font=\small,axis/.style={very thick, ->, >=stealth'}]
\draw[line,thick,-] (-21.5,-0.5)--(-21.5,0.8);
\draw[line,thick,-] (-21.5,0.8)--(-3.5,0.8);
\draw[line,thick,-] (-3.5,0.8)--(-3.5,-0.5);
\draw[line,thick,-] (-3.5,-0.5)--(-21.5,-0.5);
\node[](one)at (-20,0.5){\text{Choose right $c$}};
\node[](two)at (-20,0){\text{\& $\Phi$}};
\draw [line,thick,->] (-18.5,0.5)--(-18,0.5);
\node[](one)at (-16,0.5){\text{Obtain the right ALP}};
\draw [line,thick,->] (-14,0.5)--(-13.5,0.5);
\node[](one)at (-10.2,0.5){\text{Choose a \emph{good} $W$ by computing}};
\node[](one)at (-10.2,0.0){\text{$||\Gamma\bj-\tg\bj||_\infty$ for various $W$s}};
\draw [line,thick,->] (-7.0,0.5)--(-6.5,0.5);
\node[](one)at (-5,0.5){\text{Arrive at the}};
\node[](one)at (-5,0){\text{ right GRLP}};
\end{tikzpicture}
%}
\caption{A step by step method to arrive at the right GRLP.}
\label{sbsp}
\end{figure}
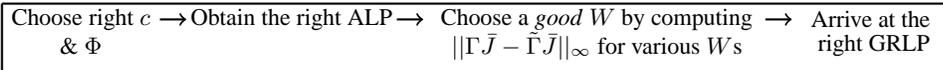
\vspace{10pt}
\subsection{Choice of $\Phi$ and $c$}
The polynomial features are known to work well for this problem. \cite{ALP} gives a proper justification of this choice using arguments based on Lyapunov function.\\
A good choice of $c$ is 
\begin{align}\label{cchoice}
c(s)=(1-\zeta) \zeta^s, \mb\forall s=0,1,\ldots,n-1.
\end{align}
$c$ is the state relevance weight vector and denotes the relative importance of the various states. By choosing $c$ as in \eqref{cchoice}, one can give importance to smaller queue lengths compared to large queue lengths. This choice is also supported by the fact that the stationary probability $\pi(s)$ of the state $s$ in the case of a stable uncontrolled queue is of the form $\pi(s)\propto (\frac{\rho}{1-\rho})^s$. However, when $n$ is small (say $10$) $c$ can have a uniform distribution, since the ratio of $\frac{c(n)}{c(0)}=\zeta^n$ is close to $1$ for $\zeta$ sufficiently close to $1$ (say $\zeta=0.99$), i.e., the state relevance weights do not decay much for small $n$.
\subsection{Choice of $W$}
Grouping adjacent states as in \eqref{wdes} (presented in the main body of the paper) might be a good idea. This choice of $W$ was validated by Table~\ref{errterms} and we provide further insights by presenting the active and passive constraints of both ALP and GRLP for smaller system $Q_S$ in Figure~\ref{feasible}.
\FloatBarrier
\begin{figure}[H]
\begin{tabular}{cc}
\begin{tikzpicture}
\begin{axis}[scale=0.8, transpose legend,
legend columns=1,
legend style={at={(0.5,-0.1)},anchor=north},
]
\addplot[domain=-60:60,thick,smooth,variable=\y,black]  plot ({-74.000000+-1.000000*\y},{\y});
\addplot[domain=-60:60,dashed,variable=\y,black]  plot ({-124.000000+-2.000000*\y},{\y});
\addplot[mark=*,black] plot file{r_ALP};
\addplot[domain=-60:60,thick,smooth,variable=\y,black]  plot ({-242.000000+-10.800000*\y},{\y});
\addplot[domain=-60:60,dashed,variable=\y,black]  plot ({-292.000000+-11.800000*\y},{\y});
\addplot[domain=-60:60,dashed,variable=\y,black]  plot ({-24.000000+9.800000*\y},{\y});
\addplot[domain=-60:60,dashed,variable=\y,black]  plot ({-192.000000+9.800000*\y},{\y});
\addplot[domain=-60:60,dashed,variable=\y,black]  plot ({-174.000000+-3.000000*\y},{\y});
\addplot[domain=-60:60,dashed,variable=\y,black]  plot ({-342.000000+-12.800000*\y},{\y});
\addplot[domain=-60:60,dashed,variable=\y,black]  plot ({-224.000000+-4.000000*\y},{\y});
\addplot[domain=-60:60,dashed,variable=\y,black]  plot ({-392.000000+-13.800000*\y},{\y});
\addplot[domain=-60:60,dashed,variable=\y,black]  plot ({-274.000000+-5.000000*\y},{\y});
\addplot[domain=-60:60,dashed,variable=\y,black]  plot ({-442.000000+-14.800000*\y},{\y});
\addplot[domain=-60:60,dashed,variable=\y,black]  plot ({-324.000000+-6.000000*\y},{\y});
\addplot[domain=-60:60,dashed,variable=\y,black]  plot ({-492.000000+-15.800000*\y},{\y});
\addplot[domain=-60:60,dashed,variable=\y,black]  plot ({-374.000000+-7.000000*\y},{\y});
\addplot[domain=-60:60,dashed,variable=\y,black]  plot ({-542.000000+-16.800000*\y},{\y});
\addplot[domain=-60:60,dashed,variable=\y,black]  plot ({-424.000000+-8.000000*\y},{\y});
\addplot[domain=-60:60,dashed,variable=\y,black]  plot ({-592.000000+-17.800000*\y},{\y});
\addplot[domain=-60:60,dashed,variable=\y,black]  plot ({-474.000000+-18.800000*\y},{\y});
\addplot[domain=-60:60,dashed,variable=\y,black]  plot ({-642.000000+-28.600000*\y},{\y});
\addlegendentry{Active Constraints}
\addlegendentry{Inactive Constraints}
\addlegendentry{$\tr$}
\end{axis}
\end{tikzpicture}
&
\begin{tikzpicture}
\begin{axis}[scale=0.8,xmin=-500,xmax=400,transpose legend,
legend columns=1,
legend style={at={(0.5,-0.1)},anchor=north},
]
\addplot[domain=-60:60,smooth,variable=\y,black]  plot ({-133.000000+1.950000*\y},{\y});
\addplot[domain=-60:60,dashed,variable=\y,red]  plot ({-74.000000+-1.000000*\y},{\y});
\addplot[domain=-60:60,dotted,variable=\y,black]  plot ({-333.000000+-9.400000*\y},{\y});
\addplot[mark=*,green] plot file{r_RLP};
\addplot[mark=o,blue] plot file{r_ALP};
\addplot[domain=-60:60,smooth,variable=\y,black]  plot ({-233.000000+-7.400000*\y},{\y});
\addplot[domain=-60:60,dotted,variable=\y,black]  plot ({-433.000000+-11.400000*\y},{\y});
\addplot[domain=-60:60,dotted,variable=\y,black]  plot ({-533.000000+-18.300000*\y},{\y});
\addplot[domain=-60:60,dashed,variable=\y,red]  plot ({-242.000000+-10.800000*\y},{\y});
\addlegendentry{Active Constraints of GRLP}
\addlegendentry{Active Constraints of ALP}
\addlegendentry{Inactive Constraints of GRLP}
\addlegendentry{$\tj$}
\addlegendentry{$\hj$}
\end{axis}
\end{tikzpicture}
\end{tabular}
\caption{Constraints of the ALP (left) and the GRLP for system $Q_S$  with $n=10$, $d=2$, $k=2$, $m=5$, $q(1)=0.2, q(2)=0.4, p=0.2$ and $\alpha=0.98$. In this case $c$ has a uniform distribution.
The dotted and solid lines in the right plot show the inactive and active constraints of the GRLP respectively, the dashed lines in the right plot show the active constraints of the ALP. The feasible region in both cases (ALP \& GRLP) are to the right of the corresponding active constraints.}
\label{feasible}
\end{figure}
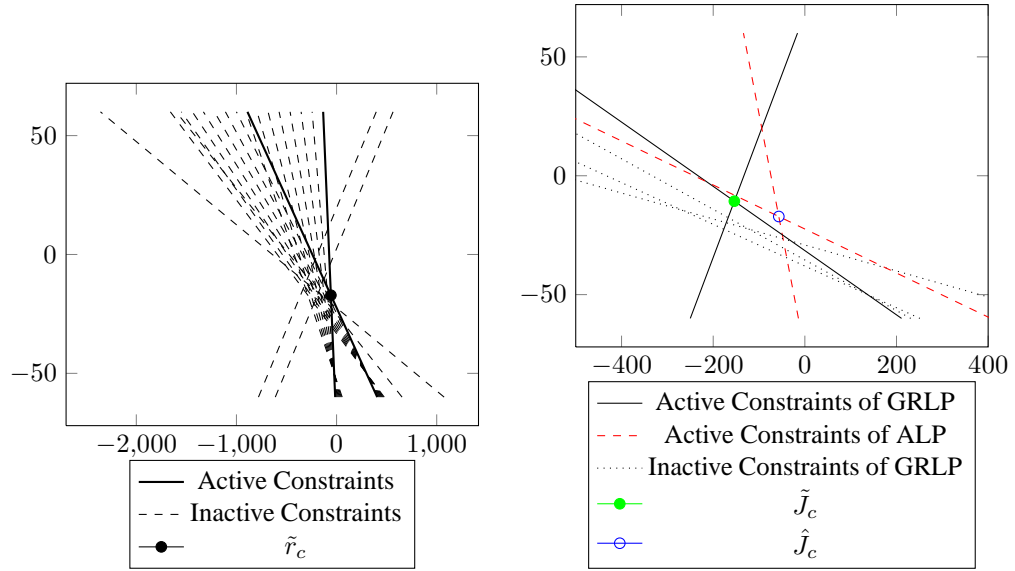
The following plot shows the functions related to $\et$ (for system $Q_S$):
\FloatBarrier
\begin{figure}[H]
\centering
\begin{tikzpicture}
\begin{axis}[legend pos=north west, legend columns=2, scale=0.75, xlabel=state, ylabel=negative of value function]
\addplot[mark=square,green]  plot file {./gjb};
\addplot[mark=triangle,red]  plot file {./gtjb};
\tiny
\addlegendentry{$-\Gamma\bj$}
\addlegendentry{$-\tg\bj$}
\end{axis}
\end{tikzpicture}
\end{figure}
The following plot shows the various error terms (for system $Q_S$):
\FloatBarrier
\centering
\begin{tikzpicture}
\begin{axis}[legend pos=south east,legend columns=2, scale=0.7,ymin=-100,xlabel=state,ylabel=negative of value function]
\addplot[mark=square,red]  plot file {./V};
\addplot[mark=triangle,green]  plot file {./V_g};
\addplot[mark=o,blue]  plot file {./V_alp};
\addplot[dashed,black]  plot file {./V_rlp};
\addplot[dashed,mark=*,black]  plot file {./V_t};
\tiny
\addlegendentry{$-J^*$}
\addlegendentry{$-\tv$}
\addlegendentry{$-\tj$}
\addlegendentry{$-\hj$}
\addlegendentry{$-\hv$}
\end{axis}
\end{tikzpicture}
\subsection{Performance of Greedy Policy}
We define the greedy policy $\hu$ as the one which is greedy with respect to $\hj$, i.e.,
\begin{align}\label{greedypol}
\hu(s)&=\arg\max_{ a\in A}\big(g_a(s)+\alpha \sum_{s'}p_a(s,s')\hj(s')\big).
\end{align}
The following plot shows the performance of $\hu$ in the case of the two systems ($Q_S$ and $Q_L$).
\FloatBarrier
\begin{figure}[H]
\begin{tabular}{cc}
%\input{queue}
%&
%\input{queuered}
\begin{tikzpicture}
\begin{axis}[legend pos=south east,legend columns=1, scale=0.7,ymin=-200,xlabel=state, ylabel=negative of value function]
\addplot[mark=.,black]  plot file {./V};
\addplot[mark=.,dashed]  plot file {./V_pol_rlp};
\tiny
\addlegendentry{$-J^*$}
\addlegendentry{$-J_{\hu}$}
\end{axis}
\end{tikzpicture}
&
\begin{tikzpicture}
\begin{axis}[legend pos=south east,legend columns=1, scale=0.7,ymin=-200,xlabel=state, ylabel=negative of value function]
\addplot[mark=.,black]  plot file {./Vl};
\addplot[mark=.,dashed]  plot file {./Vl_pol_rlp};
\addplot[mark=.,dotted]  plot file {./Vl999_pol_rlp};
\tiny
\addlegendentry{$-J^*$}
\addlegendentry{$-J_{\hu}, \zeta=0.9.$}
\addlegendentry{$-J_{\hu}, \zeta=0.999.$}
\end{axis}
\end{tikzpicture}
\end{tabular}
\caption{Plot corresponding to $Q_S$ on the left and $Q_L$ on the right.}
\label{polplot}
\end{figure}
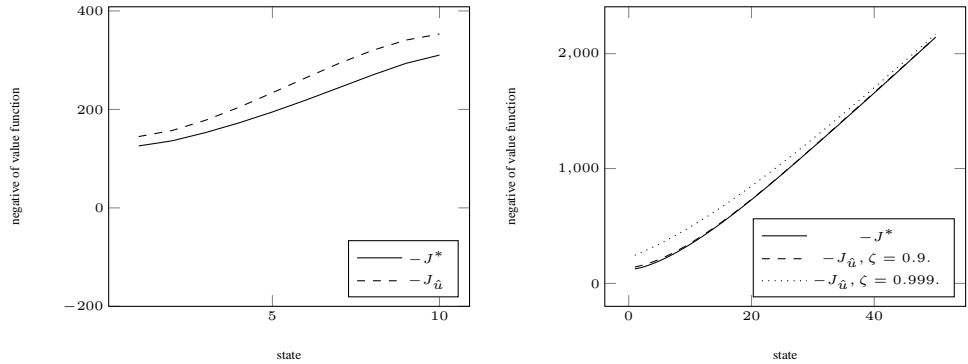
It can be seen from the figures that $J_{\hu}$ is close to $J^*$. In particular, in the case of $Q_L$, $J_{\hu}$ is nearly the same as $J^*$ for $\zeta=0.999$.
\end{appendix}
\end{document}